
\documentclass[aip, jmp, nofootinbib]{revtex4-1}

\usepackage{amsfonts, amsmath, amssymb}
\usepackage{amsthm, bm}
\usepackage{comment}

\theoremstyle{definition}
\newtheorem{Def}{Definition}
\newtheorem{Rem}[Def]{Remark}

\theoremstyle{plain}
\newtheorem{Thm}[Def]{Theorem}
\newtheorem{Lem}[Def]{Lemma}
\newtheorem{Prop}[Def]{Proposition}
\newtheorem{Cor}[Def]{Corollary}

\bibliographystyle{unsrt}

\newcommand{\cpl}[2]{\langle\langle #1, #2 \rangle\rangle}
\newcommand{\snabla}{\nabla\hspace{-0.67em}/\hspace{0.2em}}

\begin{document}

\sloppy

\title{Lorentzian positive mass theorem for spacetimes with distributional curvature}

\author{Keigo Shibuya}
\affiliation{Graduate School of Mathematics, Nagoya University, Nagoya 464-8602, Japan}


\begin{abstract}
In this paper, we prove Lorentzian positive mass theorem for spacetimes with distributional curvature.
To do so, we introduce distributional curvature and generalized Arnowitt-Deser-Misner (ADM) momentum.
As an application, we discuss a junction of spacetimes.
\end{abstract}


\maketitle

\section{Introduction \label{sec:intro}}

In general relativity, the positive mass theorem holds for asymptotically flat spacetimes, which guarantees the stability of spacetimes.
There are two types, Riemannian \cite{SY1} and Lorentzian version \cite{SY2,W,PT}.
The former version is for time-symmetric initial data with non-negative Ricci scalar.
The latter has been proven for spacetimes satisfying the dominant energy condition.
The theorem is usually proven under the assumption of enough smoothness.
In more detail, it is assumed that the metric has regular second derivatives to define its Riemannian curvature locally (for example, see Refs. \onlinecite{D} and \onlinecite{GT}).
Meanwhile, Miao showed that Riemannian positive mass theorem admits a jump of extrinsic curvature \cite{M}.
For instance, it guarantees the stability of spacetimes with a thin-shell matter.
Recently, Lee and LeFloch proved Riemannian positive mass theorem for spacetimes with distributional curvature \cite{LL}.

In this paper, employing Witten type proof based on spinor \cite{W}, we extend Lee \& LeFloch's work to Lorentzian cases.
At the same time, we will make a minor correction for their work \footnote{Lee \& LeFloch assumed $W^{1,n}_{-\tau}$-asymptotic flatness for $\tau\geq\tau_0=\frac{n-2}{2}$ or $\tau=\tau_0$ through Ref. \onlinecite{LL}. But, some proofs therein failed in equality case $\tau=\tau_0$ and then the corresponding results require each correction. So, in this paper, we give corrected statements.}.

The main theorem of this paper is summarized as

\begin{Thm}[Lorentzian positive mass theorem with distributional curvature]\label{thm:main}
Let $(\Sigma,g_{ij},K_{ij})$ be an $n$-dimensional $W^{1,n}_{-\tau}$-asymptotically flat data with $\tau>\tau_0:=\frac{n-2}{2}$ (definition \ref{def:af}) and $P$ its generalized ADM $(n+1)$-momentum (definition \ref{def:gam}).
If this data has a spin structure and satisfies the dominant energy condition in distributional sense (definition \ref{def:dec}), then $P(U)$ is non-negative for any future-directed vector $U$ which is constant on the frame $\Phi$.
In addition, if $P$ is zero, then the data has a globally parallel spinor frame with respect to the spacetime connection $\nabla$.
\end{Thm}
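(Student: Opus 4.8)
The plan is to run the Witten-type spinorial argument at the regularity level permitted by $W^{1,n}_{-\tau}$-asymptotic flatness, showing that the distributional curvature terms are integrable enough to make every integration by parts legitimate. First I would fix the constant future-directed vector $U$ on the frame $\Phi$ and set up the Dirac--Witten operator $\mathcal{D} = \snabla + \text{(extrinsic curvature terms)}$ associated with the spacetime connection $\nabla$; the key algebraic input is the Lichnerowicz--Weitzenb\"ock-type identity expressing $\mathcal{D}^*\mathcal{D}$ as $\nabla^*\nabla$ plus a zeroth-order term built from the constraint quantities $(\mu, J)$, i.e. the energy-momentum content appearing in the dominant energy condition. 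Because the metric is only $W^{2,n}_{\mathrm{loc}}$ (equivalently $g\in W^{1,n}$ for its first derivatives in the sense used here), this identity must be read distributionally, and the curvature contributions land in $L^{n/2}_{\mathrm{loc}}$ (or the appropriate weighted space), which is exactly the borderline Sobolev exponent for which the pairing with $|\psi|^2\in W^{1,n/(n-1)}\hookrightarrow$ continuous functions makes sense.

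Next I would solve the boundary-value problem $\mathcal{D}\psi = 0$ with $\psi\to\psi_0$ at infinity, where $\psi_0$ is the constant spinor corresponding to $U$; existence follows from the fact that $\mathcal{D}$ is (essentially) invertible modulo the asymptotically constant spinors, using the weighted Sobolev elliptic theory valid for $\tau>\tau_0$ (this is where the strict inequality $\tau>\tau_0$, rather than $\tau=\tau_0$, is used — it gives a genuine Fredholm/coercivity statement and avoids the equality-case gap flagged for Lee--LeFloch). Integrating the Weitzenb\"ock identity over $\Sigma$ and using the asymptotic expansion of $\psi$, the boundary term at infinity reproduces $P(U)$ up to a positive constant, while the interior term is $\int_\Sigma |\nabla\psi|^2 + \int_\Sigma (\text{DEC expression})\,|\psi|^2 \ge 0$ by the distributional dominant energy condition (definition \ref{def:dec}); hence $P(U)\ge 0$. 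The rigidity statement follows: if $P=0$ then both interior integrands vanish, so $\nabla\psi\equiv 0$ for every asymptotically constant $\psi_0$, and running over a basis of such $\psi_0$ produces a globally parallel spinor frame for $\nabla$.

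The main obstacle is controlling the distributional curvature in the integration by parts: one must verify that the boundary term in the Green-type identity for $\mathcal{D}$ has no hidden contribution from the low regularity of $g$ across hypersurfaces where the curvature concentrates. Concretely, I would approximate $g$ by smooth metrics $g_\varepsilon\to g$ in $W^{1,n}_{-\tau}$, write the identity and boundary term for $g_\varepsilon$, and pass to the limit; the $L^{n/2}$ curvature bound together with $H^1$-convergence of the spinor fields (uniform in $\varepsilon$ by the elliptic estimates) is what forces the cross terms to converge to their naive values with no defect measure. A secondary technical point is that $\mathcal{D}$ itself depends on $g$ only through first derivatives (the connection coefficients), so $\mathcal{D}$ is a well-defined operator with $L^n_{\mathrm{loc}}$ coefficients and the elliptic estimates in weighted Sobolev spaces go through by standard perturbation of the flat Dirac operator; I would cite the earlier definitions and the weighted elliptic package rather than redo it. Once the limiting argument is in place, both the inequality and the equality case are immediate from the (now rigorously justified) Witten identity.
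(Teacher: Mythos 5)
Your overall strategy is the right one and matches the paper's: solve the Dirac--Witten equation $\snabla\psi=0$ with $\psi\to\psi_0$ at infinity, integrate the Lichnerowicz--Weitzenb\"ock identity, identify the boundary term with $P(U)$, and invoke the dominant energy condition for the interior term. However, the technical mechanism you propose for handling the low regularity has a genuine gap. You suggest approximating $g$ by smooth metrics $g_\varepsilon\to g$ in $W^{1,n}_{-\tau}$, writing the identity for $g_\varepsilon$, and passing to the limit. The problem is that the dominant energy condition is a hypothesis on the \emph{distributional} curvature of $g$ itself (definition \ref{def:dec}), and it is not inherited by mollifications: the classical constraint quantities of $g_\varepsilon$ carry no sign information, so the interior term $\int(\text{DEC expression})|\psi_\varepsilon|^2$ in the smoothed identity is not non-negative, and no ``defect measure'' analysis of the cross terms will recover the sign in the limit. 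The paper avoids this entirely by proving the Weitzenb\"ock formula directly at the distributional level (lemma \ref{lem:lwf-dc}), approximating only the test \emph{spinor} by a density argument, so that the curvature enters exactly as the pairing $\cpl{H_G}{(\psi,\phi)}+\cpl{M_G}{(\psi,c(e^0)c(e^i)\phi)}$ to which the hypothesis applies verbatim. Relatedly, your coercivity for solving $\snabla\psi=0$ is not a consequence of weighted elliptic theory alone: the estimate $||\nabla\psi||_{L^2}\leq||\snabla\psi||_{L^2}$ in theorem \ref{thm:do-iso} is itself extracted from the cut-off Weitzenb\"ock identity plus the DEC, so the distributional formula must already be in place before the existence theory.

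A second gap concerns the boundary term at infinity. The generalized ADM momentum is defined as $\inf_{\varepsilon>0}\liminf_{\rho\to\infty}$ of annulus averages and need not exist as a classical surface-integral limit, so the statement that ``the boundary term reproduces $P(U)$'' is not an asymptotic expansion but a theorem: one needs (a) a sequence $\rho_k$ along which the $L^1_{-2\tau_0-1}$ error in the annulus integrals vanishes (lemma 4.1 of Lee--LeFloch, used in lemma \ref{lem:l-gam}), and (b) the DEC-driven monotonicity of $\rho\mapsto\cpl{H_G}{u_\rho}+\cpl{M_G}{v_\rho}$ (lemma \ref{lem:gam-lim}) to upgrade the $\liminf$ to an honest, $\varepsilon$-independent limit equal to $P(U)$. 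Your proposal treats this step as routine. Finally, a minor slip: $W^{1,n/(n-1)}$ does not embed into continuous functions; the pairing closes by H\"older with $u\in L^{n/(n-2)}$, $\underline{D}u\in L^{n/(n-1)}$ against $F\in L^{n/2}$ and $V\in L^n$, as in lemma \ref{lem:ex-dc}.
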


The existence of a globally parallel spinor field means that spacetime is flat. 

The organization of this paper is as following:
at first, we give a definition of distributional curvature in Sec. \ref{sec:dc}.
Next, in Sec. \ref{sec:af-gam}, we introduce asymptotically flat data and generalize the notion of ADM momentum.
In Sec. \ref{sec:lwf}, we prove Lichnerowicz-Weitzenb{\"o}ck formula for distributional curvature.
In Sec. \ref{sec:mt}, we give the main theorem and its proof.
In Sec. \ref{sec:ex}, we consider a data with corner as an example.

\section{Distributional curvature \label{sec:dc}}

In this section, we define distributional curvature.

Let $\Sigma$ be an $n$-dimensional smooth manifold and take an auxiliary Riemannian metric $h_{ij}$ on $\Sigma$.
We remark that the following arguments in this section are independent of the choice $h_{ij}$.
Using its volume measure $dh$ and Levi-Civita connection $\underline{D}_i$, we define the Lebesgue space $L^p_\mathrm{loc}$ and the Sobolev space $W^{k,p}_\text{loc}$ of functions or tensor fields on $\Sigma$.

At first, we look at a smooth data $(\Sigma, g_{ij}, K_{ij})$, where $g_{ij}$ is a Riemannian metric and $K_{ij}$ is a symmetric tensor.
We regard the data as a hypersurface with future-directed normal $t^\mu$ in $(n+1)$-dimensional spacetime $(\mathcal{M},g_{\mu\nu})$.
Then, $g_{ij}$ is the $n$-dimensional component of the induced metric $q_{\mu\nu}:=g_{\mu\nu}+t_\mu t_\nu$ from $g_{\mu\nu}$ and $K_{ij}$ is the extrinsic curvature $K_{\mu\nu}=q_\mu^{\ \alpha}\nabla_\alpha t_\nu$, where $\nabla$ denotes the Levi-Civita connection of $g_{\mu\nu}$.
As above, we use Latin and Greek indices for $n$-dimensional and $(n+1)$-dimensional components of tensors, respectively.
Now, let us suppose that the Einstein equation
\begin{equation}
\mathcal{R}_{\mu\nu}-\frac{1}{2}\mathcal{R}g_{\mu\nu}=T_{\mu\nu}
\end{equation}
holds, where $\mathcal{R}_{\mu\nu}$ and $T_{\mu\nu}$ denote the Ricci tensor of spacetime manifold and stress tensor of matters, respectively.
By using Gauss-Codazzi equations, a part of the Einstein equation gives us the constraint equations
\begin{subequations}
\label{eq:ce}
\begin{eqnarray}
\frac{1}{2}(R-K_{\mu\nu}K^{\mu\nu}+K^2) &=& \mu,\\
D_\beta(K^\beta_{\ \alpha} - K q^\beta_{\ \alpha}) &=& J_\alpha,
\end{eqnarray}
\end{subequations}
where $R$ and $D_\alpha$ are the Ricci scalar and Levi-Civita connection of $g_{ij}$.
Here $\mu:=T_{\mu\nu}t^\mu t^\nu$ and $J_\alpha:=T_{\mu\nu}t^\mu q_\alpha^{\ \nu}$ correspond to the energy and momentum densities of matters.
For later convenience, we write the left-hand side of Eqs. (\ref{eq:ce}) by $H_G$ and $M_G$, that is,
\begin{subequations}\label{eq:def-HM}
\begin{eqnarray}
\label{eq:def-H}
H_G &:=& \frac{1}{2}(R-K_{ij}K^{ij}+K^2),\\
\label{eq:def-M}
(M_G)_i &:=& D_j({K^j}_i - K \delta^j_{\ i}).
\end{eqnarray}
\end{subequations}

Next, to define distributional curvature, it is nice to divide $H_G$ and $M_G$ into two parts, that is, a part consisting of the second derivatives of $g_{ij}$ or the first derivatives of $K_{ij}$, and others.
$H_G$ includes the second derivatives of $g_{ij}$ in the Ricci scalar $R$ as
\begin{equation}
R = \underline{D}_iV^i + F,
\end{equation}
where
\begin{subequations}
\begin{equation}
\label{eq:def-V}
V^i := g^{ij}g^{kl}(\underline{D}_kg_{lj}-\underline{D}_jg_{kl})
\end{equation}
and
\begin{equation}
\label{eq:def-F}
F := g^{ij} \underline{R}_{ij}-\Gamma_{ij}^k\underline{D}_kg^{ij}+\Gamma_{ji}^j\underline{D}_kg^{ik} +g^{ij}(\Gamma_{kl}^k\Gamma_{ij}^l-\Gamma_{jl}^k\Gamma_{ik}^l).
\end{equation}
\end{subequations}
Here $\Gamma_{ij}^k=g^{kl}(\underline{D}_ig_{jl}+\underline{D}_jg_{il}-\underline{D}_lg_{ij})$ is the difference of two connections and $\underline{R}_{ij}$ is the Ricci tensor with respect to $h_{ij}$.
Similarly, for $M_G$, we have
\begin{equation}
(M_G)_i = \underline{D}_j({K^j}_i-K{\delta^j}_i)+W_{ij}^k({K^j}_k-K{\delta^j}_k),
\end{equation}
where $W_{ij}^k:=\Gamma_{jl}^l\delta^k_i-\Gamma_{ij}^k$.
These expressions motivate us to define distributional curvature as below.

\begin{Def}[Distributional Curvature]
Assume $g_{ij} \in L^\infty_\text{loc}\cap W^{1,2}_\text{loc}$ and $K_{ij}\in L^2_\text{loc}$.
Then we can define the distributional curvature $(H_G, M_G)$ by integrals
\begin{subequations}\label{eq:def-dc}
\begin{equation}\label{eq:def-dc-H}
\cpl{H_G}{u}  = \int_\Sigma\frac{1}{2}\left(-V^i\underline{D}_i\left(u\frac{dg}{dh}\right)+(F+K_{ij}K^{ij}-K^2)u\frac{dg}{dh}\right)dh
\end{equation}
and
\begin{equation}\label{eq:def-dc-M}
\cpl{M_G}{v} = \int_\Sigma\left(-({K^j}_i-K{\delta^j}_i)\underline{D}_j\left(v^i\frac{dg}{dh}\right)+W_{ij}^k({K^j}_k-K{\delta^j}_k)v^i\frac{dg}{dh}\right)dh
\end{equation}
\end{subequations}
for all smooth functions $u$ and vectors $v^i$ with compact supports.
In this equations, $dh$ and $dg$ are respectively the volume measures of $h_{ij}$ and $g_{ij}$, and $\frac{dg}{dh}$ is the Radon-Nikodym derivative in the measure theory.
\end{Def}

\begin{Rem}
By $g_{ij}\in L^\infty_\text{loc}$ in the above, we mean not only $g_{ij}\in L^\infty_\text{loc}$ but also $g^{ij}\in L^\infty_\text{loc}$.
This condition ensures $g_{ij}\in W^{k,p}_\text{loc} \Leftrightarrow g^{ij}\in W^{k,p}_\text{loc}$.
Similarly, by $g_{ij}\in \mathcal{B}^0$, we mean $g_{ij} \in \mathcal{B}^0$ and $g^{ij} \in \mathcal{B}^0$, where $\mathcal{B}^0$ denotes the space of bounded continuous fields.
\end{Rem}

In the following, we simply call such pair $U=(u, v^i)$ a vector and say that the vector is future-directed if $u\geq\sqrt{g_{ij}v^iv^j}$.

Next, we introduce the dominant energy condition \cite{HE,Wa}.
In smooth cases, it is expressed as $\mu\geq\sqrt{g^{ij}J_iJ_j}$.
In our settings, however, the quantities $\mu$ and $J_i$ are also distributions through the Einstein equation.
So the dominant energy condition for them is defined as below.

\begin{Def}\label{def:dec}
Let $(\Sigma,g_{ij}, K_{ij})$ be an initial data such that $g_{ij} \in L^\infty_\text{loc}\cap W^{1,2}_\text{loc}$ and $K_{ij}\in L^2_\text{loc}$.
Then we say that the data satisfies the dominant energy condition if 
\begin{equation}\label{eq:dec-muJ}
\cpl{\mu}{u}+\cpl{J}{v} \geq 0
\end{equation}
holds for any smooth, future-directed vector field $(u,v^i)$ with compact support.
\end{Def}

Using the Einstein equation, the inequality (\ref{eq:dec-muJ}) of the dominant energy condition is rewritten as 
\begin{equation}\label{eq:dec-HM}
\cpl{H_G}{u}+\cpl{M_G}{v}\geq0.
\end{equation}
Then, it is obvious that the above regularity condition for $(g_{ij}, K_{ij})$ is needed to define the energy condition.
But it is not enough to prove the positive mass theorem as Lee \& LeFloch showed \cite{LL}.
So we choose the domain for the curvature with a merely strict condition, that is,

\begin{Lem}\label{lem:ex-dc}
If $g_{ij}\in L^\infty_\text{loc}\cap W^{1,n}_\text{loc}$ and $K_{ij}\in L^n_\text{loc}$, then the distributional curvature $(H_G,M_G)$ can be defined on
\begin{equation}
X_0=\{U\in L^\frac{n}{n-2}\ |\ \underline{D}U \in L^\frac{n}{n-1}, \text{ supp}(U)\text{ : compact}\},
\end{equation} 
where $U=(u,v^i)$ and $\underline{D}U=(\underline{D}u, \underline{D}v^i)$.
\end{Lem}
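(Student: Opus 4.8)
The plan is to show that each term appearing in the integrals (\ref{eq:def-dc-H}) and (\ref{eq:def-dc-M}) is integrable against a test field $U\in X_0$, by carefully tracking the regularity class of every factor and invoking Hölder's inequality together with the Sobolev embedding on $\Sigma$. The key point is that the definition of the distributional curvature only makes formal sense once every pairing $\cpl{H_G}{u}$ and $\cpl{M_G}{v}$ converges absolutely; so the content of the lemma is precisely a bookkeeping of exponents.

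First I would establish the regularity of the auxiliary quantities built from $g_{ij}$. Since $g_{ij}\in L^\infty_\text{loc}$ we have $g^{ij}\in L^\infty_\text{loc}$ as well (by the Remark), and since $g_{ij}\in W^{1,n}_\text{loc}$ the first derivatives $\underline{D}g_{ij}$ lie in $L^n_\text{loc}$. Consequently the Christoffel-difference tensor $\Gamma^k_{ij}$, being a contraction of $g^{-1}$ with $\underline{D}g$, lies in $L^n_\text{loc}$; the same holds for $V^i$ and $W^k_{ij}$. The term $F$ in (\ref{eq:def-F}) contains the piece $g^{ij}\underline{R}_{ij}\in L^\infty_\text{loc}$ (the auxiliary Ricci tensor is smooth), the pieces $\Gamma\cdot\underline{D}g^{ij}$ which are products of two $L^n_\text{loc}$ factors and hence lie in $L^{n/2}_\text{loc}$, and the purely quadratic pieces $g\,\Gamma\Gamma$ which also lie in $L^{n/2}_\text{loc}$. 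Likewise $\frac{dg}{dh}$ is bounded above and below locally, and $\underline{D}(dg/dh)$ is an $L^n_\text{loc}$ quantity since it is built from $g^{ij}$ and $\underline{D}g_{ij}$; so multiplying any field by $dg/dh$ or differentiating the product only costs an extra $L^n_\text{loc}$ or $L^\infty_\text{loc}$ factor and never spoils integrability.

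Next I would check that $U\in X_0$ has exactly the integrability needed to absorb these factors. By definition $U\in L^{n/(n-2)}_\text{loc}$ with $\underline{D}U\in L^{n/(n-1)}_\text{loc}$ and compact support. In (\ref{eq:def-dc-H}) the first term pairs $V^i\in L^n_\text{loc}$ against $\underline{D}(u\,dg/dh)$, which by the product rule is a sum of $\underline{D}u\cdot(dg/dh)\in L^{n/(n-1)}_\text{loc}$ and $u\cdot\underline{D}(dg/dh)\in L^{n/(n-2)}_\text{loc}\cdot L^n_\text{loc}\subset L^{n/(n-1)}_\text{loc}$; since $\tfrac1n+\tfrac{n-1}{n}=1$, the product is in $L^1$ on the (compact) support. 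The second term pairs $(F+K_{ij}K^{ij}-K^2)\in L^{n/2}_\text{loc}$ — note $K_{ij}\in L^n_\text{loc}$ forces $K_{ij}K^{ij}\in L^{n/2}_\text{loc}$ — against $u\,(dg/dh)\in L^{n/(n-2)}_\text{loc}$, and $\tfrac2n+\tfrac{n-2}{n}=1$ again gives $L^1$. The estimates for (\ref{eq:def-dc-M}) are identical in structure: $({K^j}_i-K\delta^j_i)\in L^n_\text{loc}$ pairs with $\underline{D}(v^i\,dg/dh)\in L^{n/(n-1)}_\text{loc}$, and $W^k_{ij}({K^j}_k-K\delta^j_k)\in L^{n/2}_\text{loc}$ pairs with $v^i\,(dg/dh)\in L^{n/(n-2)}_\text{loc}$. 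Hence every integrand in (\ref{eq:def-dc}) is in $L^1(\Sigma)$, so the pairings are well-defined finite numbers, and linearity in $U$ is immediate; this proves the lemma.

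The main obstacle — really the only subtlety — is verifying that $X_0$ is the \emph{right} space, i.e.\ that the conjugate exponents actually close up: one must notice that the "top order" factors ($V^i$, $K^j_{\ i}-K\delta^j_{\ i}$) sit in $L^n_\text{loc}$ and the "quadratic" factors ($F$ up to its smooth piece, $KK$, $W\cdot(K-K\delta)$) sit in $L^{n/2}_\text{loc}$, so that the test field is asked for $L^{n/(n-1)}$ control on its derivative and $L^{n/(n-2)}$ control on itself — and these are exactly the exponents conjugate to $n$ and $n/2$. (Compare: with only $g\in W^{1,2}_\text{loc}$, $K\in L^2_\text{loc}$ one would instead need $\underline{D}U\in L^2_\text{loc}$ and $U\in L^\infty_\text{loc}$, which is too restrictive to run Witten's argument — this is precisely Lee \& LeFloch's observation.) One should also remark, as the section already notes, that changing the auxiliary metric $h_{ij}$ changes $V^i$, $F$, $W^k_{ij}$ and $dg/dh$ but in a way that leaves the pairings (\ref{eq:def-dc}) invariant, so the space $X_0$ and the functional it carries are genuinely independent of $h$.
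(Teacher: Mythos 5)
Your proposal is correct and follows essentially the same route as the paper: apply the Leibniz rule to the integrands of Eqs.~(\ref{eq:def-dc}), record that $V^i$, $W^k_{ij}$, $K_{ij}$, $\underline{D}(dg/dh)$ lie in $L^n_{\text{loc}}$ and $F$, $K_{ij}K^{ij}-K^2$ in $L^{n/2}_{\text{loc}}$, and close each term by H\"older against the conjugate exponents $\tfrac{n}{n-1}$ and $\tfrac{n}{n-2}$ carried by $X_0$ on the compact support. Your write-up is merely more explicit than the paper's one-line ``power counting.''
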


\begin{proof}
By the Leibniz rule, we rewrite the integrands of the right-hand side in Eqs. (\ref{eq:def-dc}) as
\begin{subequations}
\begin{equation}\label{eq:Hi}
\frac{1}{2}\left(-V^i\underline{D}_iu\frac{dg}{dh}-V^iu\underline{D}_i\frac{dg}{dh}+(F+K_{ij}K^{ij}-K^2)u\frac{dg}{dh}\right)
\end{equation}
and
\begin{equation}
-({K^j}_i-K{\delta^j}_i)\underline{D}_jv^i\frac{dg}{dh}-({K^j}_i-K{\delta^j}_i)v^i\underline{D}_j\frac{dg}{dh}+W_{ij}^k({K^j}_k-K{\delta^j}_k)v^i\frac{dg}{dh}.
\end{equation}
\end{subequations}
Since we see $V^i\in L^n_\text{loc}$, $\underline{D}_iu\in L^{\frac{n}{n-1}}_\text{loc}$ and $\frac{dg}{dh}\in L^\infty_\text{loc}$, the power counting $\frac{1}{n}+\frac{n-1}{n}+\frac{1}{\infty}=1$ shows that the first term $V^i\underline{D}_iu\frac{dg}{dh}$ of Eq. (\ref{eq:Hi}) is integrable.
Here we used the compactness of the support $supp(u)$.
Similarly, the power counting tells us that the remaining parts are also integrable.
Here, we used the fact of $\underline{D}\frac{dg}{dh}\in L^n_\text{loc}$, $F\in L^{n/2}_\text{loc}$, $K_{ij}\in L^n_\text{loc}$ and $W^i_{jk}\in L^n_\text{loc}$, too.
\end{proof}

So we will work at the space $X_0$ defined in the above.

\section{Asymptotically flat space and generalized ADM momentum \label{sec:af-gam}}

In this section, introducing weighted functional spaces, we define $W^{1,n}_{-\tau}$-asymptotic flatness.
Then, we present the definition of ADM $(n+1)$-momentum for the current case and explore its feature.

\subsection{Definition of asymptotic flatness \label{subsec:af}}

Let $(\Sigma, h_{ij})$ be a Riemannian manifold.
We assume that there are a compact subset $C\subset \Sigma$ and an isomorphism $\Phi\colon \Sigma\setminus C\xrightarrow{\sim} \mathbb{R}^n\setminus\bm{B}(1)$ such that the induced metric $\Phi_*(h_{ij})$ coincides with the Euclidean metric $\delta_{ij}$ on $\mathbb{R}^n$, where $\bm{B}(1)$ denotes the unit ball in $\mathbb{R}^n$.
We call such a pair $(\Sigma, h_{ij})$ a background manifold.
In addition, we choose a smooth positive function $r$ on $\Sigma$ so that $r$ coincides with the ordinary radial function on $\Sigma\setminus C\approx\mathbb{R}^n\setminus \bm{B}(1)$.
Then, for $p\geq1, q\in\mathbb{R}$, we define $L^p_q=L^p_q(h)$ as the space of all measurable $u$ with finite norm
\begin{equation}
||u||_{L^p_q(h)} = \left(\int_\Sigma(|u|r^{-q})^pr^{-n}dh\right)^\frac{1}{p}.
\end{equation}
We remark that the element $u$ of these spaces could be tensor fields and $|u|$ denotes its $h$-norm.
In addition, for an integer $k\geq1$, we define $W^{k,p}_q=W^{k,p}_q(h)$ as the space of all measurable $u$ with finite norm
\begin{equation}
||u||_{W^{k,p}_q(h)} = \sum_{l\leq k}||\underline{D}^{(l)}u||_{L^p_{q-l}(h)}.
\end{equation}
For these spaces, see Ref. \onlinecite{Ba}.

Now, we are ready to define asymptotic flatness for the current case.
Following Lee \& LeFloch's study \cite{LL}, we employ the definition of asymptotic flatness which covers wide class of spacetimes.

\begin{Def}[asymptotic flatness]\label{def:af}
Let $(\Sigma, g_{ij}, K_{ij})$ be an initial data such that $g_{ij}$ is the bounded continuous field (namely $g_{ij}\in \mathcal{B}^0$).
Given $\tau>0$, we say that the data is $W^{1,n}_{-\tau}$-asymptotically flat if $g_{ij}-h_{ij}\in W^{1,n}_{-\tau}$ and $K_{ij}\in L^n_{-\tau-1}$.
\end{Def}

\subsection{Generalized ADM $(n+1)$-momentum \label{subsec:gam}}

In general, we can define the ADM $(n+1)$-momentum $(m,p_i)$ for ordinary, asymptotically flat data, that is, the pair $(m, p_i)$ is defined by
\begin{subequations}
\begin{equation}
m=\lim_{\rho\to\infty}\left(\frac{1}{2}\int_{S(\rho)}(\partial_jg_{ij}-\partial_ig_{jj})\partial_irdS\right)
\end{equation}
and
\begin{equation}
p_i=\lim_{\rho\to\infty}\int_{S(\rho)}(K_i^{\ j}-K\delta_i^{\ j})\partial_jrdS,
\end{equation}
\end{subequations}
where $S(\rho)=\{r=\rho\}$ in $\mathbb{R}^n\setminus\bm{B}(1)\approx \Sigma\setminus C$ and $dS$ the Euclidean surface measure.
For the current case, the definition of ADM $(n+1)$-momentum is given as follows.

\begin{Def}[generalized ADM $(n+1)$-momentum]\label{def:gam}
Let $(\Sigma, g_{ij}, K_{ij})$ be an asymptotically flat initial data.
Then we define the generalized ADM $(n+1)$-momentum $P$ as following:
for any smooth vector field $U=(u,v^i)$ which is constant on the frame $\Phi$, $P$ maps $U$ to
\begin{equation}
P(U)=\inf_{\varepsilon>0}\liminf_{\rho\to\infty}\left(\frac{1}{\varepsilon}\int_{\rho<r<\rho+\varepsilon} \left(\frac{1}{2}uV^i D_ir+({K^i}_j-K{\delta^i}_j)v^jD_ir\right)dh\right).
\end{equation}
\end{Def}

For time-symmetric cases ($K_{ij}=0$), this reduces to the generalized ADM mass defined by Lee \& LeFloch.

\begin{Rem}
Although $P(U)$ may not be finite, it always has a definite value.
\end{Rem}

Of course, $P(u,v^i)$ is nothing but $mu+p_iv^i$ in ordinal cases.
This is verified in next section.

\subsection{Properties of generalized ADM $(n+1)$-momentum \label{subsec:p-gam}}

In general, the generalized ADM $(n+1)$-momentum may not be split into two parts, i.e. the mass and the space-momentum.
But, under some reasonable conditions, we can do that.
We shall look at the details below.

At first, we introduce the notion ``classically $L^1$''.
Let us assume $g_{ij}\in L^\infty_\text{loc}\cap W^{1,n}_\text{loc}$ and $K_{ij}\in L^n_\text{loc}$ as in lemma \ref{lem:ex-dc}.

\begin{itemize}
\item
If there is a measurable function $\tilde{H}_G\in L^1(\Sigma, h)$ such that
\begin{equation}
\cpl{H_G}{u} = \int_\Sigma \tilde{H}_G u dh
\end{equation}
for all $u\in X_0$, then we say that $H_G$ is classically $L^1$.
The $X_0$-regularity is given in definition \ref{lem:ex-dc}.
\item
If there is a measurable covariant vector field ($\tilde{M}_G)_i\in L^1(\Sigma, h)$ such that
\begin{equation}
\cpl{M_G}{v} = \int_\Sigma (\tilde{M}_G)_iv^idh
\end{equation}
for all $v^i\in X_0$, then we say that $M_G$ is classically $L^1$.
\end{itemize}

It is clear that they are satisfied if $g_{ij}$ and $K_{ij}$ are smooth enough.

Then we have the following proposition.

\begin{Prop}
Suppose that the data $(\Sigma, g_{ij}, K_{ij})$ is $W^{1,n}_{-\tau}$-asymptotically flat with $\tau > \tau_0=\frac{n-2}{2}$.
Then, we can show the following three statements.
\begin{enumerate}
\item\label{item:splt-m}
If $H_G$ is classically $L^1$ outside some compact region, then for any $\varepsilon>0$
\begin{equation}\label{eq:def-m}
m:=P(1,0)=\lim_{\rho \to \infty}\left(\frac{1}{2\varepsilon}\int_{\{\rho<r<\rho+\varepsilon\}}V^iD_irdh\right)
\end{equation}
exists, is finite and does not depend on $\varepsilon$.
In particular, we can write the ADM $(n+1)$-momentum as
\begin{equation}
P(u,v)=mu+\inf_{\varepsilon>0}\liminf_{\rho\to\infty}\left(\frac{1}{\varepsilon}\int_{\rho<r<\rho+\varepsilon} \left(({K^i}_j-K{\delta^i}_j)v^jD_ir\right)dh\right).
\end{equation}
\item\label{item:splt-p}
If $M_G$ is classically $L^1$ outside some compact region, then for any $\varepsilon>0$
\begin{equation}\label{eq:def-sm}
p_j:=P(0,e_{(j)})=\lim_{\rho\to\infty}\left(\frac{1}{\varepsilon}\int_{\rho<r<\rho+\varepsilon} \left(({K^i}_l-K{\delta^i}_l)e_{(j)}^lD_ir\right)dh\right)
\end{equation}
exists, is finite and does not depend on $\varepsilon$, where $e_{(j)}$ is a smooth extension of $\partial_j$ in the coordinate $\Phi$ to $\Sigma$.
In particular, we can write the ADM $(n+1)$-momentum as
\begin{equation}
P(u,v)=\inf_{\varepsilon>0}\liminf_{\rho\to\infty}\left(\frac{1}{2\varepsilon}\int_{\rho<r<\rho+\varepsilon} uV^i D_irdh\right) +p_jv^j.
\end{equation}
\item\label{item:splt-mp}
If the distributional curvatures $H_G$ and $M_G$ are classically $L^1$ outside some compact region, then we can write the ADM $(n+1)$-momentum as
\begin{equation}
P(u,v)=mu+p_jv^j
\end{equation}
\end{enumerate}
\end{Prop}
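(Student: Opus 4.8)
The plan is to prove items (\ref{item:splt-m}) and (\ref{item:splt-p}) by a single cut-off (Stokes-type) computation applied to the defining integrals (\ref{eq:def-dc-H}), (\ref{eq:def-dc-M}), and then to obtain (\ref{item:splt-mp}) by additivity of convergent limits. Fix $\varepsilon>0$; for $\rho$ so large that the annulus $A_\rho:=\{\rho<r<\rho+\varepsilon\}$ lies in the asymptotic region $\Sigma\setminus C$, let $\chi_\rho$ be the ``tent'' function that equals $1$ on $\{r\le\rho\}$, equals $(\rho+\varepsilon-r)/\varepsilon$ on $A_\rho$, and equals $0$ on $\{r\ge\rho+\varepsilon\}$, so that $\underline{D}_i\chi_\rho=-\varepsilon^{-1}\underline{D}_ir$ on $A_\rho$ and vanishes elsewhere. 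Being Lipschitz with compact support, $\chi_\rho$ (and likewise $e_{(j)}^i\chi_\rho$) lies in $X_0$ (Lemma~\ref{lem:ex-dc}), so $\cpl{H_G}{\chi_\rho}$ and $\cpl{M_G}{e_{(j)}\chi_\rho}$ are computed by (\ref{eq:def-dc-H}) and (\ref{eq:def-dc-M}).

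For item (\ref{item:splt-m}) expand $\underline{D}_i(\chi_\rho\,\tfrac{dg}{dh})$ in (\ref{eq:def-dc-H}) with $u=\chi_\rho$ by the Leibniz rule. The term carrying $\underline{D}_i\chi_\rho$ is $\tfrac{1}{2\varepsilon}\int_{A_\rho}V^i\underline{D}_ir\,\tfrac{dg}{dh}\,dh$; since $r$ is a scalar, $\underline{D}_ir=D_ir$, so this differs from $\tfrac{1}{2\varepsilon}\int_{A_\rho}V^iD_ir\,dh$ only by the factor $\tfrac{dg}{dh}-1$. What is left of (\ref{eq:def-dc-H}) is $\int_\Sigma\chi_\rho\,\mathcal{F}\,dh$ with $\mathcal{F}:=\tfrac12\bigl(-V^i\underline{D}_i\tfrac{dg}{dh}+(F+K_{ij}K^{ij}-K^2)\tfrac{dg}{dh}\bigr)$. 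On the other hand, after subtracting a fixed $\chi_{\rho_0}$ with $\rho_0$ large enough that $H_G$ is classically $L^1$ on $\{r>\rho_0\}$, one has $\cpl{H_G}{\chi_\rho-\chi_{\rho_0}}=\int_{\{r>\rho_0\}}\tilde{H}_G(\chi_\rho-\chi_{\rho_0})\,dh$, which converges as $\rho\to\infty$ by dominated convergence. Equating and rearranging,
\begin{align*}
\frac{1}{2\varepsilon}\int_{A_\rho}V^iD_ir\,dh
&=\cpl{H_G}{\chi_\rho}-\int_\Sigma\chi_\rho\,\mathcal{F}\,dh\\
&\quad-\frac{1}{2\varepsilon}\int_{A_\rho}V^iD_ir\Bigl(\tfrac{dg}{dh}-1\Bigr)\,dh ,
\end{align*}
whose right-hand side has a limit as $\rho\to\infty$ once (i) $\mathcal{F}\in L^1(\Sigma,h)$ and (ii) the last integral tends to $0$; none of the three limits depends on $\varepsilon$, so $m$ exists, is finite and is $\varepsilon$-independent, and since an admissible $u$ equals the constant $u$ on $A_\rho$ the stated splitting of $P(u,v)$ follows. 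Item (\ref{item:splt-p}) is identical with $\chi_\rho\mapsto e_{(j)}^i\chi_\rho$, $\tilde{H}_G\mapsto(\tilde{M}_G)_ie_{(j)}^i$, and $\mathcal{F}$ replaced by the matching $M_G$-density (built from $({K^j}_i-K{\delta^j}_i)\underline{D}_j\tfrac{dg}{dh}$ and $W_{ij}^k({K^j}_k-K{\delta^j}_k)\tfrac{dg}{dh}$).

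Item (\ref{item:splt-mp}) is then immediate: under its hypothesis both (\ref{item:splt-m}) and (\ref{item:splt-p}) apply, so $\tfrac{1}{2\varepsilon}\int_{A_\rho}uV^iD_ir\,dh\to mu$ and $\tfrac{1}{\varepsilon}\int_{A_\rho}({K^i}_j-K{\delta^i}_j)v^jD_ir\,dh\to p_jv^j$ are genuine limits, each independent of $\varepsilon$; adding them shows that the defining expression of $P(u,v)$ in Definition~\ref{def:gam} has the genuine, $\varepsilon$-independent limit $mu+p_jv^j$ as $\rho\to\infty$, so $P(u,v)=mu+p_jv^j$. The main obstacle is (i)--(ii): that the bulk densities $\mathcal{F}$ and its $M_G$-analogue --- schematically $(\partial g)^2$, $K^2$ and $K\,\partial g$ --- lie in $L^1(\Sigma,h)$, and that the annulus error carrying $\tfrac{dg}{dh}-1$ vanishes. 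Both are settled by weighted H\"older inequalities on $A_\rho$ together with $g_{ij}-h_{ij}\in W^{1,n}_{-\tau}$ and $K_{ij}\in L^n_{-\tau-1}$, and both rest on the power count $2(\tau+1)>n$, i.e.\ $\tau>\tau_0=\tfrac{n-2}{2}$ --- exactly the strict inequality assumed here, and the very point at which the borderline case $\tau=\tau_0$ of Lee \& LeFloch breaks down.
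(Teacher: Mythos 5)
Your overall strategy---cut off against the defining integrals, use the classically-$L^1$ hypothesis plus dominated convergence for the bulk terms, and deduce item \ref{item:splt-mp} from items \ref{item:splt-m} and \ref{item:splt-p} by additivity of genuine limits---is the same as the paper's. The one substantive difference is the choice of test function, and it is exactly there that your argument has a gap. Testing against $\chi_\rho$ (resp.\ $e_{(j)}^i\chi_\rho$), the annulus term you extract from Eq.~(\ref{eq:def-dc-H}) is $\frac{1}{2\varepsilon}\int_{\{\rho<r<\rho+\varepsilon\}}V^iD_ir\,\frac{dg}{dh}\,dh$, so you must show that the discrepancy
\[
\frac{1}{2\varepsilon}\int_{\{\rho<r<\rho+\varepsilon\}}V^iD_ir\Bigl(\tfrac{dg}{dh}-1\Bigr)dh
\]
tends to $0$ as $\rho\to\infty$. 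Your justification---``weighted H\"older plus the power count $2(\tau+1)>n$''---does not cover this term. The bulk densities $\mathcal{F}$ are indeed products of two factors each of weighted order $r^{-\tau-1}$, hence lie in $L^{n/2}_{-2\tau-2}\subset L^1(\Sigma,h)$ precisely when $\tau>\tau_0$. But $V^i(\frac{dg}{dh}-1)$ is schematically $(\partial g)\cdot(g-h)\in L^n_{-\tau-1}\cdot L^n_{-\tau}\subset L^{n/2}_{-2\tau-1}$, and $L^{n/2}_{-2\tau-1}\subset L^1(\Sigma,h)$ requires $2\tau+1>n$, i.e.\ $\tau>\frac{n-1}{2}$, strictly stronger than the hypothesis. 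Nor can you fall back on pointwise decay of $\frac{dg}{dh}-1$: the embedding $W^{1,p}_{-\tau}\hookrightarrow C^0_{-\tau}$ fails at the borderline exponent $p=n$ assumed here. What one can actually show is only that this error lies in $L^1_{-2\tau_0-1}$, so it vanishes along \emph{some} sequence $\rho_k\to\infty$ (lemma 4.1 of Lee--LeFloch); that yields convergence of a subsequence, not the existence of the full limit asserted in Eq.~(\ref{eq:def-m}). The same defect recurs verbatim in your item \ref{item:splt-p}, where the analogous error is $({K^j}_i-K\delta^j_{\ i})(\frac{dg}{dh}-1)\underline{D}_j\chi_\rho$.

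The paper's proof eliminates the term rather than estimating it: it tests against $u_\rho=\chi_\rho\frac{dh}{dg}$ (resp.\ $v_\rho^i=e_{(j)}^i\chi_\rho\frac{dh}{dg}$), so that $u_\rho\frac{dg}{dh}=\chi_\rho$ exactly and the annulus contribution is $\frac{1}{2\varepsilon}\int_{\{\rho<r<\rho+\varepsilon\}}V^iD_ir\,dh$ on the nose. The remaining pieces---$F+K_{ij}K^{ij}-K^2\in L^{n/2}_{-2\tau-2}\subset L^1$ integrated against $\chi_\rho$, and $\tilde{H}_G\in L^1$ integrated against the bounded functions $\chi_\rho\frac{dh}{dg}$---then converge by dominated convergence to $\varepsilon$-independent limits. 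If you make this one modification to the test function, the rest of your argument, including the passage to item \ref{item:splt-mp}, goes through as written.
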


Since the third statement follows directly from the first two, so we focus on the proof for the statements \ref{item:splt-m} and \ref{item:splt-p}.

Suppose that $H_G$ and $M_G$ are classically $L^1$ on a compact region $A$ and take a cut-off function
\begin{equation}\label{eq:def-chi}
\chi_\rho(x):=\chi_{\rho,\varepsilon}(x):=
\begin{cases}
1 & (r(x)\leq\rho)\\
1 - \frac{1}{\varepsilon}(r(x)-\rho) & (\rho\leq r(x)\leq \rho+\varepsilon)\\
0 & (r(x)\geq \rho+\varepsilon)
\end{cases}
\end{equation}
for $\varepsilon >0$ and a sufficiently large constant $\rho>0$ such that $A\cup C\subset\{r < \rho\}$.
Here $C$ is a compact subset of $\Sigma$ appeared in the definition of asymptotic flatness.

{\it Proof for the statement \ref{item:splt-m}:\\}
Consider $u_\rho = \chi_\rho\frac{dh}{dg}$ as a test function for $H_G$.
It is easy to see $u_\rho\in X_0$.
Then, using
\begin{equation}\label{eq:Drho}
\underline{D}_i\chi_\rho(x)=
\begin{cases}
0 & (r(x)\leq\rho \text{ or }r(x)\geq \rho+\varepsilon)\\
- \frac{1}{\varepsilon}\underline{D}_ir & (\rho\leq r(x)\leq \rho+\varepsilon),\\
\end{cases}
\end{equation}
Eq. (\ref{eq:def-dc-H}) becomes
\begin{equation}
\cpl{H_G}{u_\rho} = \frac{1}{2\varepsilon}\int_ {\rho<r<\rho+\varepsilon}V^iD_irdh+\frac{1}{2}\int_\Sigma(F+K_{ij}K^{ij}-K^2)\chi_\rho dh.
\end{equation}
On the other hand, we can simply split $\cpl{H_G}{u_\rho}$ into integrations over $A$ and $\Sigma\setminus A$ as
\begin{eqnarray}
\cpl{H_G}{u_\rho}&=&\int_A\frac{1}{2}\left(-V^i\underline{D}_i\chi_\rho+(F+K_{ij}K^{ij}-K^2)\chi_\rho\right)dh\nonumber\\
&&+\int_{\Sigma\setminus A}\frac{1}{2}\left(-V^i\underline{D}_i\left(u_\rho\frac{dg}{dh}\right)+(F+K_{ij}K^{ij}-K^2)u_\rho\frac{dg}{dh}\right)dh\nonumber\\
&=&\frac{1}{2}\int_A\left(F+K_{ij}K^{ij}-K^2\right)dh + \int_{\Sigma\setminus A}\tilde{H}_Gu_\rho dh,
\end{eqnarray}
where we used the assumption that $H_G$ is classically $L^1$ on $\Sigma\setminus A$.
Therefore we have
\begin{eqnarray}
\frac{1}{2\varepsilon}\int_ {\rho<r<\rho+\varepsilon}V^iD_irdh&=&\frac{1}{2}\int_A\left(F+K_{ij}K^{ij}-K^2\right)dh\nonumber\\
&& +\int_{\Sigma\setminus A}\tilde{H_G}u_\rho dh-\frac{1}{2}\int_\Sigma(F+K_{ij}K^{ij}-K^2)\chi_\rho dh.\label{eq:1m}
\end{eqnarray}
Since $\tilde{H}_G\in L^1(\Sigma\setminus A)$ and $F+K_{ij}K^{ij}-K^2\in L^{n/2}_{-2\tau-2}(\Sigma)\subset L^1(\Sigma)$, Eq. (\ref{eq:1m}) tells us that Eq. (\ref{eq:def-m}) has a finite limit and is independent of $\varepsilon$.
Now, we can see that the first statement holds.

{\it Proof for the statement \ref{item:splt-p}:}\\
Consider $v^i_\rho=e_{(j)}^i\chi_\rho\frac{dh}{dg}\in X_0$ as a test function for $M_G$.
Then, using Eq. (\ref{eq:Drho}), Eq. (\ref{eq:def-dc-M}) becomes
\begin{eqnarray}
\cpl{M_G}{v_\rho}&=&\frac{1}{\varepsilon}\int_{\rho<r<\rho+\varepsilon}({K^i}_l-K{\delta^i}_l)e_{(j)}^l\underline{D}_irdh\nonumber\\
&&-\int_{supp (\underline{D}e_{(j)})}({K^i}_l-K{\delta^i}_l)\chi_\rho \underline{D}_ie_{(j)}^ldh+\int_\Sigma W_{il}^k({K^l}_k-K{\delta^l}_k)e_{(j)}^i\chi_\rho dh.
\end{eqnarray}
On the other hand, we can simply split $\cpl{M_G}{v_\rho}$ into ``$A$'' and ``$\Sigma\setminus A$'' parts as
\begin{eqnarray}
\cpl{M_G}{v_\rho}&=& \int_A\left(-({K^j}_i-K{\delta^j}_i)\underline{D}_j\left(\chi_\rho e_{(j)}^i\right)+W_{ij}^k({K^j}_k-K{\delta^j}_k)\chi_\rho e_{(j)}^i\right)dh\nonumber\\
&&+ \int_{\Sigma\setminus A}\left(-({K^j}_i-K{\delta^j}_i)\underline{D}_j\left(v^i\frac{dg}{dh}\right)+W_{ij}^k({K^j}_k-K{\delta^j}_k)v^i\frac{dg}{dh}\right)dh\nonumber\\
&=&\int_A\left(-({K^i}_l-K{\delta^i}_l)\underline{D}_ie_{(j)}^l+W_{il}^k({K^l}_k-K{\delta^l}_k)e_{(j)}^i\right)dh\nonumber\\
&&+\int_{\Sigma\setminus A}(\tilde{M}_G)_iv_\rho^i dh
\end{eqnarray}
where we used the assumption that $M_G$ is classically $L^1$ on $\Sigma\setminus A$.
Therefore we have
\begin{eqnarray}
\frac{1}{\varepsilon}\int_{\rho<r<\rho+\varepsilon}({K^i}_l-K{\delta^i}_l)e_{(j)}^l\underline{D}_irdh&=&\int_A\left(-({K^i}_l-K{\delta^i}_l)\underline{D}_ie_{(j)}^l+W_{il}^k({K^l}_k-K{\delta^l}_k)e_{(j)}^i\right)dh\nonumber\\
&&+\int_{\Sigma_\setminus A}(\tilde{M}_G)_iv^i_\rho dh+\int_{supp (\underline{D}e_{(j)})}({K^i}_l-K{\delta^i}_l)\chi_\rho \underline{D}_i{e_{(j)}}^ldh\nonumber\\
&&-\int_\Sigma W_{il}^k({K^l}_k-K{\delta^l}_k)e_{(j)}^i\chi_\rho dh.\label{eq:2m}
\end{eqnarray}
Since $(\tilde{M}_G)_i\in L^1(\Sigma\setminus A)$, $W_{il}^k({K^l}_k-K{\delta^l}_k)\in L^{n/2}_{-2\tau-2}(\Sigma)\subset L^1(\Sigma)$ and $supp (\underline{D}e_{(j)})$ is compact, Eq. (\ref{eq:2m}) tells us that Eq. (\ref{eq:def-sm}) has a finite limit and is independent of $\varepsilon$.
So the statement  \ref{item:splt-p} is also proven.

This proposition is restricted to data with classically $L^1$ curvature.
For general data, we have

\begin{Lem}\label{lem:gam-lim}
Let $(\Sigma, g_{ij}, K_{ij})$ be a $W^{1,n}_{-\tau}$-asymptotically flat data with $\tau>\tau_0=\frac{n-2}{2}$.
If it satisfies the dominant energy condition (see definition \ref{def:dec}), then, for any $\varepsilon>0$ and any future-directed vector $(u,v)$ which is constant on the frame $\Phi$,
\begin{equation}
P(u,v)=\lim_{\rho\to\infty}\left(\frac{1}{\varepsilon}\int_{\rho<r<\rho+\varepsilon} \left(\frac{1}{2}uV^i D_ir+({K^i}_j-K{\delta^i}_j)v^jD_ir\right)dh\right)
\end{equation}
exists and does not depend on $\varepsilon>0$.
In particular, if $P(u,v)=0$ for any future-directed $(u,v)$, then $P(u,v)=0$ for all $(u,v)$.
\end{Lem}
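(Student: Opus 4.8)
The plan is to turn the $\inf_{\varepsilon}\liminf_{\rho}$ of Definition \ref{def:gam} into an honest limit by testing the \emph{full} distributional constraints $(H_G,M_G)$ against the cut-off $\chi_{\rho,\varepsilon}$ of (\ref{eq:def-chi}) times the given vector, and by exploiting that the resulting quantity is monotone in $\rho$ and in $\varepsilon$ once the dominant energy condition is imposed.

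Fix a future-directed $U=(u,v^i)$ constant on $\Phi$. First I would take, as in the proof of the preceding Proposition, the test pair $(\chi_{\rho,\varepsilon}u\tfrac{dh}{dg},\chi_{\rho,\varepsilon}v^i\tfrac{dh}{dg})$, which lies in $X_0$ and, being a non-negative multiple of $U$, is again future-directed. Writing $f(\rho,\varepsilon):=\cpl{H_G}{\chi_{\rho,\varepsilon}u\tfrac{dh}{dg}}+\cpl{M_G}{\chi_{\rho,\varepsilon}v^i\tfrac{dh}{dg}}$, the dominant energy condition (\ref{eq:dec-HM}) (read on $X_0$) gives $f(\rho,\varepsilon)\geq0$, and by Lemma \ref{lem:ex-dc} it is a finite real number for each $\rho,\varepsilon$. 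Expanding by the Leibniz rule and (\ref{eq:Drho}) exactly as in the Proposition — and using that $\underline{D}u$ and $\underline{D}v^i$ are compactly supported because $U$ is constant on $\Phi$ — I would obtain, for $\rho$ large,
\[
f(\rho,\varepsilon)=I(\rho,\varepsilon)+c_0+\int_\Sigma\Psi\,\chi_{\rho,\varepsilon}\,dh,
\]
where $I(\rho,\varepsilon)$ is the integral appearing in Definition \ref{def:gam}, $c_0$ is a $\rho$- and $\varepsilon$-independent constant built from the compactly supported $\underline{D}u,\underline{D}v^i$ terms, and $\Psi:=\tfrac12(F+K_{ij}K^{ij}-K^2)u+W_{ij}^k({K^j}_k-K{\delta^j}_k)v^i$. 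As in the Proposition, $F+K_{ij}K^{ij}-K^2$ and $W_{ij}^k({K^j}_k-K{\delta^j}_k)$ lie in $L^{n/2}_{-2\tau-2}(\Sigma)\subset L^1(\Sigma)$ — this inclusion is exactly where the strict hypothesis $\tau>\tau_0$ enters — and since $u,v^i$ are bounded, $\Psi\in L^1(\Sigma,h)$.

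Next I would record the elementary monotonicity of the cut-offs: for fixed $x$, $\chi_{\rho,\varepsilon}(x)$ is non-decreasing in $\rho$ and in $\varepsilon$, and $\chi_{\rho,\varepsilon_2}\leq\chi_{\rho+\varepsilon_2,\varepsilon_1}$ whenever $\varepsilon_1\leq\varepsilon_2$. Feeding the non-negative differences $\chi_{\rho_2,\varepsilon}-\chi_{\rho_1,\varepsilon}$, etc. (times $U$, still future-directed and in $X_0$), into the dominant energy condition yields $f(\rho_1,\varepsilon)\leq f(\rho_2,\varepsilon)$, so $L(\varepsilon):=\lim_{\rho\to\infty}f(\rho,\varepsilon)$ exists in $(-\infty,+\infty]$; the other two comparisons give $f(\rho,\varepsilon_1)\leq f(\rho,\varepsilon_2)\leq f(\rho+\varepsilon_2,\varepsilon_1)$, hence $L(\varepsilon_1)\leq L(\varepsilon_2)\leq L(\varepsilon_1)$, so $L$ is independent of $\varepsilon$. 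Meanwhile, since $\Psi\in L^1$ and $0\leq\chi_{\rho,\varepsilon}\leq1$ with $\chi_{\rho,\varepsilon}\to1$ pointwise, $\bigl|\int_\Sigma\Psi(1-\chi_{\rho,\varepsilon})\,dh\bigr|\leq\int_{\{r>\rho\}}|\Psi|\,dh\to0$ as $\rho\to\infty$, uniformly in $\varepsilon$. Therefore $I(\rho,\varepsilon)=f(\rho,\varepsilon)-c_0-\int_\Sigma\Psi\chi_{\rho,\varepsilon}\,dh\to L-c_0-\int_\Sigma\Psi\,dh$ as $\rho\to\infty$, with a value independent of $\varepsilon$, so $P(U)=\inf_{\varepsilon>0}\liminf_{\rho\to\infty}I(\rho,\varepsilon)$ equals this common limit $\lim_{\rho\to\infty}I(\rho,\varepsilon)$ for every $\varepsilon$ (possibly $+\infty$, as allowed by the Remark after Definition \ref{def:gam}). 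For the final clause, given an arbitrary $(u,v^i)$ constant on $\Phi$, the bound $g_{ij}\in\mathcal B^0$ and boundedness of $u,v^i$ make $\Lambda:=\sup_\Sigma(\sqrt{g_{ij}v^iv^j}-u)_+$ finite, so both $(u+\Lambda,v^i)$ and $(\Lambda,0)$ are future-directed and constant on $\Phi$; by linearity of $I(\rho,\varepsilon;\cdot)$ in the vector argument and the case just proved, $I(\rho,\varepsilon;u,v)=I(\rho,\varepsilon;u+\Lambda,v)-I(\rho,\varepsilon;\Lambda,0)\to P(u+\Lambda,v)-P(\Lambda,0)=0$, whence $P(u,v)=0$.

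\textbf{The main obstacle.} The hard part will be justifying that the dominant energy inequality (\ref{eq:dec-HM}), stated for \emph{smooth} compactly supported future-directed fields, persists for the merely $X_0$-regular test fields $\chi_{\rho,\varepsilon}U\tfrac{dh}{dg}$ and for the differences used in the monotonicity step; this should follow from the continuity of $\cpl{H_G}{\cdot}$ and $\cpl{M_G}{\cdot}$ on $X_0$ (Lemma \ref{lem:ex-dc}) together with an approximation of a future-directed $X_0$ field by smooth future-directed ones, but it is the point that needs care. The remaining bookkeeping — isolating $c_0$, checking $\Psi\in L^1(\Sigma,h)$ (where $\tau>\tau_0$ is used), and the pointwise comparisons of the $\chi_{\rho,\varepsilon}$ — is routine.
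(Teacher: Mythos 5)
Your proposal is correct and follows essentially the same route as the paper: apply the dominant energy condition to the cut-off test pair $(\chi_{\rho,\varepsilon}u\tfrac{dh}{dg},\chi_{\rho,\varepsilon}v^i\tfrac{dh}{dg})$ to get monotonicity in $\rho$ and $\varepsilon$-independence of the limit, then relate $f(\rho,\varepsilon)$ to the boundary integral $I(\rho,\varepsilon)$ via the Leibniz-rule decomposition with $L^1$ remainders, and conclude the last clause by writing an arbitrary constant vector as a difference of future-directed ones. The density issue you flag (extending the smooth-test-field energy condition to the Lipschitz $X_0$ test fields) is likewise left implicit in the paper, so your account matches its level of rigor.
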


\begin{proof}
Consider $(u_\rho, v^i_\rho)=(u\chi_\rho\frac{dh}{dg},v^i\chi_\rho\frac{dh}{dg})$ as a test function for $(H_G,M_G)$.

(i) At first, we show that $\cpl{H_G}{u_\rho}+\cpl{M_G}{v_\rho}$ is monotonically increasing and has the limit
\begin{equation}\label{eq:HM}
\lim_{\rho\to\infty}\left(\cpl{H_G}{u_\rho}+\cpl{M_G}{v_\rho}\right).
\end{equation}
Let $\rho_1<\rho_2$.
Then it is easily checked that $(u_{\rho_2}-u_{\rho_1}, v_{\rho_2}-v_{\rho_1})$ is also future-directed.
So the dominant energy condition implies
\begin{eqnarray}
\left(\cpl{H_G}{u_{\rho_2}}+\cpl{M_G}{v_{\rho_2}}\right)-\left(\cpl{H_G}{u_{\rho_1}}+\cpl{M_G}{v_{\rho_1}}\right)\nonumber\\
=\cpl{H_G}{u_{\rho_2}-u_{\rho_1}}+\cpl{M_G}{v_{\rho_2}-v_{\rho_1}}\geq0,
\end{eqnarray}
that is, $\cpl{H_G}{u_\rho}+\cpl{M_G}{v_\rho}$ is monotonically increasing.

(ii) Next, we show that the limit (\ref{eq:HM}) is independent of $\varepsilon>0$.
To do so, we write $\varepsilon$ explicitly as $(u_{\rho,\varepsilon},v_{\rho,\varepsilon}^i)=(u\chi_{\rho,\varepsilon}\frac{dh}{dg},v^i\chi_{\rho,\varepsilon}\frac{dh}{dg})$.
We take arbitrary $\varepsilon_1,\varepsilon_2>0$ and $\rho_1$, and choose $\rho_2$ such that $\rho_2>\rho_1+\varepsilon_1$.
Then $(u_{\rho_2,\varepsilon_2}-u_{\rho_1,\varepsilon_1}, v_{\rho_2,\varepsilon_2}^i-v_{\rho_1,\varepsilon_1}^i)$ is also future-directed.
So, by the dominant energy condition, we have 
\begin{eqnarray}
\left(\cpl{H_G}{u_{\rho_2,\varepsilon_2}}+\cpl{M_G}{v^i_{\rho_2,\varepsilon_2}}\right)-\left(\cpl{H_G}{u_{\rho_1,\varepsilon_1}}+\cpl{M_G}{v^i_{\rho_1,\varepsilon_1}}\right)\nonumber\\
=\cpl{H_G}{u_{\rho_2,\varepsilon_2}-u_{\rho_1,\varepsilon_1}}+\cpl{M_G}{v_{\rho_2,\varepsilon_2}^i-v_{\rho_1,\varepsilon_1}^i}\geq0.
\end{eqnarray}
This implies
\begin{equation}
\lim_{\rho_2\to\infty}\left(\cpl{H_G}{u_{\rho_2,\varepsilon_2}}+\cpl{M_G}{v^i_{\rho_2,\varepsilon_2}}\right)\geq\cpl{H_G}{u_{\rho_1,\varepsilon_1}}+\cpl{M_G}{v^i_{\rho_1,\varepsilon_1}}.
\end{equation}
In particular,
\begin{equation}
\lim_{\rho_2\to\infty}\left(\cpl{H_G}{u_{\rho_2,\varepsilon_2}}+\cpl{M_G}{v^i_{\rho_2,\varepsilon_2}}\right)\geq\lim_{\rho_1\to\infty}\left(\cpl{H_G}{u_{\rho_1,\varepsilon_1}}+\cpl{M_G}{v^i_{\rho_1,\varepsilon_1}}\right).
\end{equation}
Since $\varepsilon_1$ and $\varepsilon_2$ are arbitrary, this inequality implies that the limit (\ref{eq:HM}) is independent of $\varepsilon>0$.

(iii) By using (i) and (ii), we complete our proof of this lemma.
From Eqs. (\ref{eq:1m}) and (\ref{eq:2m}), we have
\begin{eqnarray}
&&\frac{1}{\varepsilon}\int_ {\rho<r<\rho+\varepsilon}\left(\frac{1}{2} uV^iD_ir+({K^i}_j-K{\delta^i}_l)v^j\underline{D}_ir\right)dh\nonumber\\
&=&\cpl{H_G}{u_\rho}+\cpl{M_G}{v_\rho}-\frac{1}{2}\int_\Sigma(F+K_{ij}K^{ij}-K^2)u_\rho dh\nonumber\\
&&\ \ +\int_{supp (\underline{D}v)}({K^i}_j-K{\delta^i}_j)\underline{D}_iv_\rho^jdh-\int_\Sigma W_{il}^k({K^l}_k-K{\delta^l}_k)v^i_\rho dh.\label{eq:Prho}
\end{eqnarray}
Here, note that $F+K_{ij}K^{ij}-K^2$ and $W_{il}^k({K^l}_k-K{\delta^l}_k)$ belong to $L^{n/2}_{-2\tau-2}(\Sigma)\subset L^1(\Sigma)$ and $supp(\underline{D}v_\rho)$ is compact.
So, together with (i) and (ii), it is directly shown that the right-hand side of Eq. (\ref{eq:Prho}) has a limit independent of $\varepsilon>0$.

The last part is obvious.
\end{proof}

\section{Lichnerowicz-Weitzenb{\"o}ck formula \label{sec:lwf}}

Since we will prove the main theorem using spinor, we introduce spinor bundle and spin connections, and then we show Lichnerowicz-Weitzenb{\"o}ck formula for distributional curvature, which will be a key part of the proof for our main theorem.

\subsection{Spin connections \label{subsec:lwf-sc}}

At first, we introduce spinor bundle we work on.

Suppose that $(\Sigma, h_{ij})$ has a spin structure, that is, there exist a principal $Spin_n$ bundle for the cotangent bundle $T^*M$ with the metric $h^{ij}$.
As Lee \& LeFloch showed \cite{LL}, one can regard the $Spin_n$ structure as that of $g_{ij}\in C^0$.
Then, one extends this $Spin_n$ structure to a $Spin_{n,1}$ structure \cite{D} and constructs spinor bundle $S$ using this $Spin_{n,1}$ structure.

For convenience, we fix a local frame of the principal $Spin_{n,1}$ bundle and consider the corresponding local frames $\underline{e}^i$, $e^i$ and $\psi_I$ for $(T^*\Sigma, h^{ij})$, $(T^*\Sigma,g^{ij})$ and $S$, respectively.
The subscript ``$I$'' of $\psi_I$ denotes the label of spinor and takes values $1,\cdots 2^{[\frac{n+1}{2}]}$.
In the below, we write all tensors by index notation with respect to the frame $e^i$ and regard a spinor as a column vector.
For example, the inner product for spinors $\psi,\phi$ is expressed as $(\psi,\phi)=\psi^\dagger\phi$, where dagger stands for the hermitian conjugate.

On the bundle $S$, we have three spin connections $\underline{D}$, $D$ and $\nabla$.
The connections are defined by
\begin{equation}
\underline{D}\psi=\partial\psi-\frac{1}{4}\underline{\bm{\omega}}_{ij}\underline{c}(\underline{e}^i)\underline{c}(\underline{e}^j)\psi
\end{equation}
using the connection $1$-form $\underline{\bm{\omega}}_i^{\ j}$ of $h_{ij}$,
\begin{equation}
D\psi=\partial\psi-\frac{1}{4}\bm{\omega}_{ij}c(e^i)c(e^j)\psi
\end{equation}
using the connection $1$-form $\bm{\omega}_i^{\ j}$ of $g_{ij}$ and
\begin{equation}
\nabla_i\psi=D_i\psi+\frac{1}{2}K_{ij}c(e^0)c(e^j)\psi
\end{equation}
using the data $K_{ij}$, where $\underline{c}(\cdot)$ and $c(\cdot)$ denote the Clifford actions on $S$ as spinor bundle for $h^{ij}$ and $g^{ij}$, respectively.
In addition, the action $c(e^0)$ corresponds to the Clifford action of the future-directed unit normal of $\Sigma$ in spacetime.

Then,  we have

\begin{Lem}\label{lem:spn-n}
Let $(\Sigma,g_{ij},K_{ij})$ be a $W^{1,n}_{-\tau}$-asymptotically flat data for $\tau>\tau_0=\frac{n-2}{2}$ with a spin structure.
Then the operator $\nabla:W^{1,2}_{-\tau_0}\to L^2_{-\tau_0-1}$ is bounded.
\end{Lem}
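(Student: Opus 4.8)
The plan is to split the spacetime spin connection as
\[
\nabla_i\psi = D_i\psi + \tfrac{1}{2}K_{ij}c(e^0)c(e^j)\psi
\]
and estimate the two summands separately in $L^2_{-\tau_0-1}$ in terms of $\|\psi\|_{W^{1,2}_{-\tau_0}}$. Two elementary ingredients from the theory of weighted spaces (Ref.~\onlinecite{Ba}) will be used repeatedly: the weighted Sobolev embedding $W^{1,2}_{-\tau_0}\hookrightarrow L^{2n/(n-2)}_{-\tau_0}$, valid since $2<n$, and the weighted H\"older inequality $\|fg\|_{L^2_{a+b}}\le\|f\|_{L^n_a}\,\|g\|_{L^{2n/(n-2)}_b}$, which is legitimate because $\tfrac1n+\tfrac{n-2}{2n}=\tfrac12$. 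I also use that $L^2_s\hookrightarrow L^2_{-\tau_0-1}$ whenever $s\le-\tau_0-1$, together with $\tau>\tau_0>0$.

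First I would treat $D\psi$. Writing $D=\underline{D}+\mathcal A$, the operator $\mathcal A$ is the zeroth-order operator measuring the difference of the two spin connections; after passing to the fixed local spinor frame $\psi_I$, its coefficients are algebraic expressions in $g_{ij}$, $g^{ij}$ and the difference tensor $\Gamma_{ij}^k=g^{kl}(\underline{D}_ig_{jl}+\underline{D}_jg_{il}-\underline{D}_lg_{ij})$ (together with the bounded gauge transformation relating $c$ and $\underline{c}$). Since $g_{ij}\in\mathcal B^0$ and $\underline{D}g=\underline{D}(g-h)\in L^n_{-\tau-1}$ by $W^{1,n}_{-\tau}$-asymptotic flatness, we obtain $\mathcal A\in L^n_{-\tau-1}$. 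Hence
\[
\|D\psi\|_{L^2_{-\tau_0-1}}\le\|\underline{D}\psi\|_{L^2_{-\tau_0-1}}+\|\mathcal A\psi\|_{L^2_{-\tau_0-1}}\le\|\psi\|_{W^{1,2}_{-\tau_0}}+C\|\mathcal A\|_{L^n_{-\tau-1}}\|\psi\|_{L^{2n/(n-2)}_{-\tau_0}},
\]
where the first term is controlled by the very definition of the $W^{1,2}_{-\tau_0}$-norm and the second by weighted H\"older (the weights add up to $-\tau-1-\tau_0\le-\tau_0-1$) followed by the weighted Sobolev embedding. Alternatively one may quote the equivalence of the $W^{1,2}_{-\tau_0}$-norms built from $\underline{D}$ and from $D$ (Lemma~3.1 of Ref.~\onlinecite{LL}).

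For the remaining term, the Clifford actions $c(e^0)$ and $c(e^j)$ are uniformly bounded operators on $S$ because $g_{ij}\in\mathcal B^0$, so $|c(e^j)\psi|\le C|\psi|$ pointwise. Therefore
\[
\Big\|\tfrac12 K_{ij}c(e^0)c(e^j)\psi\Big\|_{L^2_{-\tau_0-1}}\le C\|K\|_{L^n_{-\tau-1}}\|\psi\|_{L^{2n/(n-2)}_{-\tau_0}}\le C'\|K\|_{L^n_{-\tau-1}}\|\psi\|_{W^{1,2}_{-\tau_0}},
\]
again by weighted H\"older (weights $-\tau-1$ and $-\tau_0$) and the weighted Sobolev embedding, using $K_{ij}\in L^n_{-\tau-1}$ from the asymptotic flatness hypothesis. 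Adding the two estimates yields $\|\nabla\psi\|_{L^2_{-\tau_0-1}}\le C\|\psi\|_{W^{1,2}_{-\tau_0}}$, which is the claim. The only genuinely delicate point is the second paragraph: one must verify that the difference of the two spin connections — whose definitions already involve the two distinct Clifford multiplications $c$ and $\underline{c}$ — really is a purely algebraic, $L^n_{-\tau-1}$-bounded operator on spinors, which amounts to controlling the gauge transformation relating the $g$- and $h$-orthonormal coframes; everything else is routine weighted H\"older--Sobolev bookkeeping.
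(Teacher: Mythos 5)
Your proposal is correct and follows essentially the same route as the paper: the paper writes $A_i=\nabla_i-\underline{D}_i$ as a single zeroth-order operator (connection-form difference plus the $K$-term), shows $A\in L^n_{-\tau-1}$ from asymptotic flatness, and closes with the weighted H\"older and weighted Sobolev inequalities, exactly as you do after splitting the two contributions. The only cosmetic difference is that the paper inserts Kato's inequality to reduce the weighted Sobolev embedding to the scalar case via $|\psi|$, a step your direct appeal to the embedding for sections subsumes.
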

\begin{proof}
Let $A_i=\nabla_i-\underline{D}_i=\frac{1}{4}\underline{\bm{\omega}}_{ij}\underline{c}(\underline{e}^i)\underline{c}(\underline{e}^j)-\frac{1}{4}\bm{\omega}_{ij}c(e^i)c(e^j)+\frac{1}{2}K_{ij}c(e^0)c(e^j)$.
Since we have
\begin{equation}
||\nabla\psi||_{L^2_{-\tau_0-1}}=||\underline{D}\psi+A\psi||_{L^2_{-\tau_0-1}}\leq||\underline{D}\psi||_{L^2_{-\tau_0-1}}+||A\psi||_{L^2_{-\tau_0-1}},
\end{equation}
it is sufficient to estimate the last term $||A\psi||_{L^2_{-\tau_0-1}}$ for the proof.
From the asymptotic flatness, it can be proven that $A$ belongs to $L^n_{-\tau-1}\subset L^n_{-\tau_0-1}$.
So we compute
\begin{eqnarray}
||A\psi||_{L^2_{-\tau_0-1}}&\leq&||A||_{L^n_{-\tau_0-1}}||\psi||_{L^\frac{2n}{n-2}_{-\tau_0}}\\
&\leq&C||\, |\psi|\, ||_{W^{1,2}_{-\tau_0}}\\
&\leq&C||\psi||_{W^{1,2}_{-\tau_0}}
\end{eqnarray}
for $C=||A||_{L^n_{-\tau_0-1}}$, where we used the weighted H{\"o}lder inequality, the weighted Sobolev inequality \cite{Ba} and Kato's inequality for the each lines.
\end{proof}

\subsection{Lichnerowicz-Weitzenb{\"o}ck formula \label{subsec:lwf-lwf}}

Now it is ready to show the Lichnerowicz-Weitzen{\"o}ck formula for distributional curvature.

At first, from the pedagogical point of view, we suppose that the data $(\Sigma,g_{ij},K_{ij})$ is smooth enough.
In this case, we have the ordinary Lichnerowicz-Weitzenb{\"o}ck formula for $\nabla$ as \cite{W}
\begin{equation}\label{eq:lwf}
\snabla^2\phi = -\delta^{ij}\nabla_i\nabla_j\phi +\frac{1}{2}(H_G+(M_G)_ic(e^0)c(e^i))\phi,
\end{equation}
where $\snabla:=c(e^i)\nabla_i$ is the Dirac operator, $\phi$ is a smooth spinor field and $(H_G,M_G)$ is the curvature defined classically (see Eqs. (\ref{eq:def-HM})).
Although this is not current case, it is nice to see more.
This is because such consideration gives us a hint for the current distributional cases.

Multiplying another spinor field $\psi$ with Eq. (\ref{eq:lwf}), we get
\begin{equation}
(\psi,\snabla^2\phi)=-(\psi,\delta^{ij}\nabla_i\nabla_j\phi)+\frac{1}{2}\left(H_G(\psi,c(e^0)\phi)+(M_G)_i(\psi,c(e^0)c(e^i)\phi)\right).
\end{equation}
Then, we suppose that $\phi$ has a compact support and integrate this equation on the whole space $\Sigma$.
Then we establish the integrated version of the Lichnerowicz-Weitzenb\"ock formula
\begin{equation}\label{lwf-i}
(\snabla\psi,\snabla\phi)_{L^2}=(\nabla\psi,\nabla\phi)_{L^2}+\frac{1}{2}\int_\Sigma \left(H_G u+(M_G)_iv^i\right)dg,
\end{equation}
where we set $u=(\psi,\phi)$ and $v^i=(\psi,c(e^0)c(e^i)\phi)$.
In the above, we used the integration by part.

From the above observation for classical case, we shall arrive at

\begin{Lem}[Lichnerowicz-Weitzenb\"ock formula for distributional curvature]\label{lem:lwf-dc}
Assume $g_{ij}\in C^0\cap W^{1,n}_\text{loc}$, $K_{ij}\in L^n_\text{loc}$ and $\psi,\phi\in W^{1,2}_\text{loc}$.
If $\phi$ has a compact support, then we have
\begin{equation}\label{eq:lwf-dc}
0=-(\snabla\psi,\snabla\phi)_{L^2}+(\nabla\psi, \nabla\phi)_{L^2}+\frac{1}{2}\left(\cpl{H_G}{(\psi,\phi)}+\cpl{M_G}{(\psi,c(e^0)c(e^i)\phi)}\right).
\end{equation}
\end{Lem}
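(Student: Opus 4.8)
The plan is to derive (\ref{eq:lwf-dc}) by approximation from the smooth Witten identity (\ref{lwf-i}). First I would check that all four terms are well defined and finite. By the local/weighted Sobolev embedding $W^{1,2}_\text{loc}\hookrightarrow L^{2n/(n-2)}_\text{loc}$, Kato's inequality, and the estimate $A:=\nabla-\underline{D}\in L^n_\text{loc}$ used in the proof of lemma \ref{lem:spn-n}, one gets $\nabla\psi,\snabla\psi\in L^2_\text{loc}$, so the two $L^2$ inner products are finite since $\phi$ has compact support; and a H\"older power count exactly as in the proof of lemma \ref{lem:ex-dc} shows $u=(\psi,\phi)$ and $v^i=(\psi,c(e^0)c(e^i)\phi)$ lie in $X_0$, so the two curvature pairings are defined. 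The key preliminary observation is that \emph{for smooth data} (\ref{eq:lwf-dc}) is merely a rewriting of (\ref{lwf-i}): integrating by parts against $dh$ in (\ref{eq:def-dc}) gives $\cpl{H_G}{u}=\int_\Sigma H_G\,u\,dg$ and $\cpl{M_G}{v}=\int_\Sigma (M_G)_i v^i\,dg$ whenever $g_{ij},K_{ij}$ are smooth.

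Next I would mollify. Fix a compact set $\mathcal{K}$ containing $\mathrm{supp}\,\phi$ in its interior and choose smooth $g^{(\delta)}_{ij}\to g_{ij}$ in $\mathcal{B}^0(\mathcal{K})\cap W^{1,n}(\mathcal{K})$ (positive definite for small $\delta$), $K^{(\delta)}_{ij}\to K_{ij}$ in $L^n(\mathcal{K})$, and $\psi^{(\delta)}\to\psi$, $\phi^{(\delta)}\to\phi$ in $W^{1,2}(\mathcal{K})$ with $\mathrm{supp}\,\phi^{(\delta)}\subset\mathcal{K}$. These data produce smooth spin connections $\nabla^{(\delta)}$, Dirac operators $\snabla^{(\delta)}$, Clifford actions $c^{(\delta)}$, and smooth classical curvatures $(H_G^{(\delta)},M_G^{(\delta)})$, for which the pointwise formula (\ref{eq:lwf}) and hence (\ref{lwf-i}) apply; using the preliminary observation I rewrite it as
\begin{equation}
(\snabla^{(\delta)}\psi^{(\delta)},\snabla^{(\delta)}\phi^{(\delta)})_{L^2}=(\nabla^{(\delta)}\psi^{(\delta)},\nabla^{(\delta)}\phi^{(\delta)})_{L^2}+\tfrac12\Bigl(\cpl{H_G^{(\delta)}}{u^{(\delta)}}+\cpl{M_G^{(\delta)}}{v^{(\delta)}}\Bigr),
\end{equation}
where $u^{(\delta)}=(\psi^{(\delta)},\phi^{(\delta)})$ and $v^{i(\delta)}=(\psi^{(\delta)},c^{(\delta)}(e^0)c^{(\delta)}(e^i)\phi^{(\delta)})$ are smooth and compactly supported in $\mathcal{K}$.

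Then I would let $\delta\to0$ termwise. For the $L^2$ terms, the $g$-dependent connection coefficients converge in $L^n(\mathcal{K})$ and the Clifford actions converge uniformly on $\mathcal{K}$, so $A^{(\delta)}:=\nabla^{(\delta)}-\underline{D}\to A$ in $L^n(\mathcal{K})$; writing $A^{(\delta)}\psi^{(\delta)}-A\psi=(A^{(\delta)}-A)\psi^{(\delta)}+A(\psi^{(\delta)}-\psi)$ and using the uniform $W^{1,2}(\mathcal{K})\hookrightarrow L^{2n/(n-2)}(\mathcal{K})$ bound gives $\nabla^{(\delta)}\psi^{(\delta)}\to\nabla\psi$ and $\snabla^{(\delta)}\psi^{(\delta)}\to\snabla\psi$ in $L^2(\mathcal{K})$, and likewise for $\phi^{(\delta)}$; hence the two $L^2$ inner products converge. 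For the curvature pairings I would \emph{not} attempt a limit in $H_G^{(\delta)}$ or $M_G^{(\delta)}$ themselves; instead I expand $\cpl{H_G^{(\delta)}}{u^{(\delta)}}$ by the defining integral (\ref{eq:def-dc-H}), in which only $V^{i(\delta)}$, $F^{(\delta)}+K^{(\delta)}_{ij}K^{(\delta)ij}-(K^{(\delta)})^2$, $dg^{(\delta)}/dh$ together with $\underline{D}(dg^{(\delta)}/dh)$, and $u^{(\delta)}$ appear. These converge in $L^n(\mathcal{K})$, in $L^{n/2}(\mathcal{K})$, in $\mathcal{B}^0(\mathcal{K})$ with derivative in $L^n(\mathcal{K})$, and in $L^{n/(n-2)}(\mathcal{K})$ with derivative in $L^{n/(n-1)}(\mathcal{K})$ respectively, so the H\"older power counting of lemma \ref{lem:ex-dc} shows each product converges in $L^1(\mathcal{K})$ and therefore $\cpl{H_G^{(\delta)}}{u^{(\delta)}}\to\cpl{H_G}{u}$; the same argument gives $\cpl{M_G^{(\delta)}}{v^{(\delta)}}\to\cpl{M_G}{v}$ from (\ref{eq:def-dc-M}). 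Passing to the limit in the displayed identity and rearranging yields (\ref{eq:lwf-dc}).

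I expect the convergence of the curvature pairings to be the main obstacle, and it is precisely where the strengthened $W^{1,n}/L^n$ hypotheses (rather than the bare $W^{1,2}/L^2$ that merely suffices to \emph{define} the curvature) are needed: since $H_G^{(\delta)}$ genuinely does not converge, carrying as it does the second derivatives of $g$, one is forced to argue through the integrated-by-parts form, whose power counting closes only because $V^i\in L^n_\text{loc}$ and $F,K^2\in L^{n/2}_\text{loc}$ pair correctly against $u\in X_0$. A minor point to handle with care is that mollification changes only the connections and Clifford multiplications, not the spinor bundle $S$, which depends on $g$ only through $\mathcal{B}^0$ as in Lee \& LeFloch's construction; thus the identity for $(g^{(\delta)},K^{(\delta)})$ is literally an identity of sections of the same bundle, and verifying (\ref{lwf-i}) for the mollified data is then the standard smooth computation, valid once $g^{(\delta)}$ is positive definite.
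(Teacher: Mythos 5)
Your proposal is correct and takes essentially the same route as the paper, which simply invokes ``a density argument from the integrated version (\ref{lwf-i})'' together with the Sobolev embedding for the test vector in $X_0$ and lemma \ref{lem:spn-n} for $\nabla$. You have merely filled in the details the paper leaves implicit (mollifying the data as well as the spinors, and checking term-by-term $L^1$ convergence of the integrated-by-parts form via the H\"older power counting of lemma \ref{lem:ex-dc}), and your observation that one must work with the integrated form because $H_G^{(\delta)}$ itself does not converge is exactly the right point.
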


\begin{proof}
By density argument, it is easy to show the formula from the integrated version (\ref{lwf-i}).
Here we implicitly used the Sobolev embedding theorem $W^{1,2}_\text{loc}\subset L^{\frac{2n}{n-2}}_\text{loc}$ for $((\psi,\phi),(\psi,c(e^0)c(e^i)\phi))\in X_0$ and lemma \ref{lem:spn-n} for the treatments of $\nabla$.
\end{proof}

Following Witten \cite{W}, we derive the formula for the cases with asymptotic boundary terms.
Suppose that the manifold $\Sigma$ has a  background data $(h_{ij}, \Phi)$ and let $L^i$ be the operator defined by
\begin{equation}
L^i:=(c(e^i)c(e^j)+\delta^{ij})\nabla_j.
\end{equation}
Then we have

\begin{Lem}\label{lem:lwf-dc-b}
Assume $g_{ij}\in C^0\cap W^{1,n}_\text{loc}$ and $K_{ij}\in L^n_\text{loc}$.
Then, for any spinor field $\psi$, $\varepsilon>0$ and a sufficiently large $\rho>0$,
\begin{eqnarray}
\frac{1}{\varepsilon}\int_{\rho<r<\rho+\varepsilon}(L^i\psi,\psi)D_irdg&=&-(\snabla\psi,\chi_\rho\snabla\psi)_{L^2}+(\nabla\psi,\chi_\rho\nabla\psi)_{L^2}\nonumber\\
&&+\frac{1}{2}\left(\cpl{H_G}{\chi_\rho(\psi,\psi)}+\cpl{M_G}{(\psi,c(e^0)c(e^i)\psi}\right)
\label{eq:lwf-dc-b}
\end{eqnarray}
holds.
\end{Lem}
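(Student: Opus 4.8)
The plan is to feed the compactly supported test spinor $\phi=\chi_\rho\psi$ into the distributional Lichnerowicz-Weitzenb{\"o}ck identity of Lemma \ref{lem:lwf-dc} and then single out the annulus term produced by the derivative of the cut-off. We take $\psi\in W^{1,2}_\text{loc}$, which is what makes all quantities below meaningful. For $\rho$ large enough that $\{r\ge\rho\}$ lies in the asymptotic end, so that (\ref{eq:Drho}) applies there, the spinor $\phi=\chi_\rho\psi$ again belongs to $W^{1,2}_\text{loc}$ and has compact support, while the scalar $(\psi,\phi)=\chi_\rho(\psi,\psi)$ and the vector $(\psi,c(e^0)c(e^i)\phi)=\chi_\rho(\psi,c(e^0)c(e^i)\psi)$ lie in $X_0$ by the Sobolev embedding $W^{1,2}_\text{loc}\subset L^{2n/(n-2)}_\text{loc}$, exactly as in the proof of Lemma \ref{lem:lwf-dc}. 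Hence Lemma \ref{lem:lwf-dc} gives
\[
0=-(\snabla\psi,\snabla(\chi_\rho\psi))_{L^2}+(\nabla\psi,\nabla(\chi_\rho\psi))_{L^2}+\frac12\left(\cpl{H_G}{\chi_\rho(\psi,\psi)}+\cpl{M_G}{\chi_\rho(\psi,c(e^0)c(e^i)\psi)}\right).
\]

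Next I would expand the first two inner products by the Leibniz rule. Since the spin connections restrict to the ordinary derivative on the scalar $\chi_\rho$, one has $\snabla(\chi_\rho\psi)=(\underline{D}_i\chi_\rho)\,c(e^i)\psi+\chi_\rho\snabla\psi$ and $\nabla_j(\chi_\rho\psi)=(\underline{D}_j\chi_\rho)\psi+\chi_\rho\nabla_j\psi$. Substituting these and moving to the left the two integrals in which $\chi_\rho$ has been differentiated, the terms in which it has not are exactly the right-hand side of (\ref{eq:lwf-dc-b}); hence the lemma is reduced to the identity
\[
\int_\Sigma(\underline{D}_i\chi_\rho)\left[(\snabla\psi,c(e^i)\psi)-(\nabla^i\psi,\psi)\right]dg=\frac1\varepsilon\int_{\rho<r<\rho+\varepsilon}(L^i\psi,\psi)\,D_ir\,dg.
\]

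To prove this I would use (\ref{eq:Drho}), which gives $\underline{D}_i\chi_\rho=-\varepsilon^{-1}D_ir$ on $\{\rho<r<\rho+\varepsilon\}$ and $\underline{D}_i\chi_\rho=0$ elsewhere, so the left-hand side equals $-\varepsilon^{-1}\int_{\rho<r<\rho+\varepsilon}\left[(\snabla\psi,c(e^i)\psi)-(\nabla^i\psi,\psi)\right]D_ir\,dg$. The key is then the pointwise identity
\[
(\snabla\psi,c(e^i)\psi)=(c(e^j)\nabla_j\psi,c(e^i)\psi)=-(\nabla_j\psi,c(e^j)c(e^i)\psi)=-(c(e^i)c(e^j)\nabla_j\psi,\psi),
\]
which uses only the skew-adjointness $c(e^i)^\dagger=-c(e^i)$ of the spatial Clifford action (the convention fixed in Sec. \ref{subsec:lwf-sc}, compatible with (\ref{eq:lwf})). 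Since $(c(e^i)c(e^j)+\delta^{ij})\nabla_j\psi=L^i\psi$, the bracket above equals $-(L^i\psi,\psi)$, so the left-hand side becomes $\varepsilon^{-1}\int_{\rho<r<\rho+\varepsilon}(L^i\psi,\psi)D_ir\,dg$, which is the right-hand side; this proves (\ref{eq:lwf-dc-b}).

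The analytic input is minimal: every integral here sits over the compact set $\{r\le\rho+\varepsilon\}$, so $dg$ and $dh$ are comparable and all products are integrable by H{\"o}lder, and the distributional pairings are admissible precisely because their scalar and vector arguments lie in $X_0$; no decay rate at infinity is used at this stage — that enters only later, when $\psi$ is taken asymptotically constant on $\Phi$. The step I expect to demand the most care is the last one, recombining the two boundary integrands — one from the Dirac operator $\snabla$, the other from the full connection $\nabla$ — into the single operator $L^i$, where one must keep the Hermiticity conventions for the Clifford actions straight; a sign error there is the most likely pitfall.
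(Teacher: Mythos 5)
Your proof is correct and takes essentially the same route as the paper's: apply Lemma \ref{lem:lwf-dc} to $\phi=\chi_\rho\psi$, expand $\snabla(\chi_\rho\psi)$ and $\nabla(\chi_\rho\psi)$ by the Leibniz rule, and convert the cut-off derivative terms on the annulus into $\frac{1}{\varepsilon}\int(L^i\psi,\psi)D_ir\,dg$ using Eq.~(\ref{eq:Drho}). Your explicit verification of the skew-adjointness bookkeeping for the Clifford actions is consistent with the paper's conventions, so nothing is missing.
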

\begin{proof}
By lemma \ref{lem:lwf-dc} with $\phi=\chi_\rho\psi$, where $\chi_\rho$ is defined by Eq. (\ref{eq:def-chi}), we have
\begin{eqnarray}
0&=&-(\snabla\psi,\snabla(\chi_\rho\psi))_{L^2}+(\nabla\psi, \nabla(\chi_\rho\psi))_{L^2}\nonumber\\
&&+\frac{1}{2}\left(\cpl{H_G}{\chi_\rho(\psi,\psi)}+\cpl{M_G}{\chi_\rho(\psi,c(e^0)c(e^i)\psi)}\right).\label{eq:lwf-rho}
\end{eqnarray}
Since $\snabla(\chi_\rho\psi)=\chi_\rho\snabla\psi+D_i\chi_\rho c(e^i)\psi$, we can rewrite the first term of the right-hand side as
\begin{equation}
(\snabla\psi,\snabla(\chi_\rho\psi))_{L^2}=(\snabla\psi,\chi_\rho\snabla\psi)_{L^2}+\frac{1}{\varepsilon}\int_{\rho<r<\rho+\varepsilon}(c(e^i)c(e^j)\nabla_j\psi,\psi)D_irdg.
\end{equation}
In the above, we used the triviality of $\underline{D}_i\chi_\rho$ except for $\{\rho<r<\rho+\varepsilon\}$.
In a similar way, the second term becomes
\begin{equation}
(\nabla\psi, \nabla(\chi_\rho\psi))_{L^2}=(\nabla\psi,\chi_\rho\nabla\psi)_{L^2}-\frac{1}{\varepsilon}\int_{\rho<r<\rho+\varepsilon}\delta^{ij}(\nabla_i\psi,\psi)D_jrdg.
\end{equation}
Then, a minor rearrangement of Eq. (\ref{eq:lwf-rho}) gives us Eq. (\ref{eq:lwf-dc-b}).
\end{proof}

Next, we pick up the informations of the ADM $(n+1)$-momentum from the left-hand side of Eq. (\ref{eq:lwf-dc-b}).

\begin{Lem}\label{lem:l-gam}
Let $(\Sigma, g_{ij}, K_{ij})$ be a $W^{1,n}_{-\tau}$-asymptotically flat data with $\tau>\tau_0$ and take a spinor field $\psi_0$ which is constant on the frame $\Phi$.
Then, for any spinor field $\psi$ with $\epsilon:=\psi-\psi_0\in W^{1,2}_{-\tau_0}$, we have
\begin{eqnarray}
&&\frac{1}{\varepsilon}\int_{\rho<r<\rho+\varepsilon}(L^i\psi,\psi)D_irdg\nonumber\\
\label{eq:l-gam}
&=&\int_{\rho<r<\rho+\varepsilon}\left(\frac{\psi_0,c(e^0)\psi_0}{4\varepsilon}V^iD_ir+\frac{\psi_0,c(e^j)\psi_0}{2\varepsilon}({K^i}_j-K{\delta^i}_j)D_ir\right)dh+\int_{\rho<r<\rho+\varepsilon}udh
\end{eqnarray}
for some function $u\in L^1_{-2\tau_0-1}$.
\end{Lem}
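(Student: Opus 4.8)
The plan is to expand the integrand $(L^i\psi,\psi)D_ir$ on the annulus $\{\rho<r<\rho+\varepsilon\}$ by inserting $\psi=\psi_0+\epsilon$ and tracking which terms survive in the limit. Write $L^i\psi=(c(e^i)c(e^j)+\delta^{ij})\nabla_j\psi$. Since $\nabla_j=\underline{D}_j+A_j$ with $A_j=\frac14\underline{\bm{\omega}}_{ij}\underline{c}(\underline{e}^i)\underline{c}(\underline{e}^j)-\frac14\bm{\omega}_{kj}c(e^k)c(e^j)+\frac12K_{kj}c(e^0)c(e^k)$ (as in lemma \ref{lem:spn-n}), and $\psi_0$ is constant on $\Phi$ so that $\underline{D}_j\psi_0=0$ on $\Sigma\setminus C$, the leading term is $(c(e^i)c(e^j)+\delta^{ij})A_j\psi_0$ paired with $\psi_0$. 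The first step is therefore to compute $(\psi_0,(c(e^i)c(e^j)+\delta^{ij})A_j\psi_0)$ and identify its two pieces: the Clifford algebra identity $c(e^i)c(e^j)\bm{\omega}_{kj}c(e^k)c(e^j)$-type contractions should collapse, via the first Bianchi-type antisymmetrizations already used to produce $V^i$ in equation (\ref{eq:def-V}), to the term $\tfrac14(\psi_0,c(e^0)\psi_0)V^i$ — this is exactly the computation underlying the classical identification of the Witten boundary integral with the ADM mass — while the $K$-part of $A_j$ produces $\tfrac12(\psi_0,c(e^j)\psi_0)({K^i}_j-K{\delta^i}_j)$ after using $c(e^i)c(e^j)+c(e^j)c(e^i)=-2\delta^{ij}$ and the symmetry of $K_{ij}$. (One must be a little careful: $\bm{\omega}$ here is the connection of $g_{ij}$, not of $h_{ij}$, but the difference $\bm{\omega}-\underline{\bm{\omega}}$ is built from $\Gamma$ and $\underline{D}g$ and on $\Sigma\setminus C$ the auxiliary connection $\underline{D}$ is flat, so the relevant combination is precisely $V^i$ up to lower-order-in-weight terms.) Also $dg$ and $dh$ differ by $dg/dh=1+O(r^{-\tau})$, which only contributes to the remainder $u$.

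The second step is to collect every remaining term into the claimed remainder $u\in L^1_{-2\tau_0-1}$. These fall into three groups: (a) cross terms linear in $\epsilon$, of the schematic form $(\psi_0, \#\,\nabla\epsilon)$ or $(\nabla\epsilon,\#\,\psi_0)$ together with $(\psi_0,\#A\epsilon)$; (b) the quadratic terms $(\epsilon,\#\nabla\epsilon)$ and $(\epsilon,\#A\epsilon)$ and $(\epsilon,\#A\psi_0)$; and (c) the genuinely lower-order pieces of group one, i.e. the part of $(\psi_0,(c(e^i)c(e^j)+\delta^{ij})A_j\psi_0)$ that is $O(r^{-2\tau-1})$ after the Bianchi cancellation (for instance the quadratic-in-$\Gamma$ tail of $F$-type terms and the $\underline{D}(dg/dh)$ terms). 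For each group the weighted Hölder inequality does the job: $\epsilon\in W^{1,2}_{-\tau_0}\subset L^{2n/(n-2)}_{-\tau_0}$, $\nabla\epsilon\in L^2_{-\tau_0-1}$, and $A\in L^n_{-\tau-1}\subset L^n_{-\tau_0-1}$ since $\tau>\tau_0$, so e.g. $(\psi_0,\#\nabla\epsilon)\in L^2_{-\tau_0-1}\cdot L^\infty_0\subset$? — more precisely one pairs weights so that $(\nabla\epsilon)\cdot\psi_0\in L^2_{-\tau_0-1}$, and since $-2\tau_0-1<-\tau_0-1$ near infinity and supports are controlled, $L^2_{-\tau_0-1}\hookrightarrow L^1_{-2\tau_0-1}$ locally; the quadratic terms land in $L^1_{-2\tau_0-1}$ by Hölder with exponents $\tfrac{n-2}{2n}+\tfrac{n-2}{2n}+\tfrac1n=1$ (two factors in $L^{2n/(n-2)}$, one in $L^n$), matching the weight $2(-\tau_0)+(-\tau-1)\le -2\tau_0-1$. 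Defining $u$ to be $\tfrac1\varepsilon$ times the sum of all these integrands (divided back by the annulus, absorbed into the density), one checks $u\in L^1_{-2\tau_0-1}$ uniformly in $\rho$.

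The main obstacle I expect is the algebraic bookkeeping in the first step: verifying that the Clifford contraction $(\psi_0,(c(e^i)c(e^j)+\delta^{ij})(\bm{\omega}_{kj}c(e^k)c(e^j))\psi_0)$ reproduces exactly $\tfrac14 V^i(\psi_0,c(e^0)\psi_0)$ and not some other contraction of $\underline{D}g$, and that the $c(e^0)$ factor appears with the right coefficient — this is the place where the classical Witten computation is reused, and the subtlety is that here $g_{ij}$ is only $C^0\cap W^{1,n}_{\mathrm{loc}}$, so the manipulation must be justified distributionally, i.e. the pointwise Clifford identities hold a.e. and the integration against $D_ir\,dg$ on the annulus is legitimate because every factor lies in the appropriate $L^p_{\mathrm{loc}}$. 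Once that identification is pinned down, splitting off the remainder and estimating it is routine weighted Hölder analysis as above, and the formula (\ref{eq:l-gam}) follows.
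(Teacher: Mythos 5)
Your overall strategy---expand $\psi=\psi_0+\epsilon$, identify the $(\psi_0,\cdot\,\psi_0)$ block with $\tfrac14V^i$ and the extrinsic--curvature term via the Clifford algebra, and push everything else into the remainder---is the same as the paper's, and your handling of the purely algebraic leading term and of the pieces involving $A\epsilon$ or $A\psi_0$ paired with $\epsilon$ is fine (the latter land in $L^2_{-\tau-\tau_0-1}\subset L^1_{-2\tau_0-1}$ precisely because $\tau>\tau_0$). The gap is in your group (a): the cross term linear in $\nabla\epsilon$, schematically $(\psi_0,(c(e^i)c(e^j)+\delta^{ij})\nabla_j\epsilon)D_ir$, lies only in $L^2_{-\tau_0-1}$, and the embedding $L^2_{-\tau_0-1}\hookrightarrow L^1_{-2\tau_0-1}$ you invoke is false: with the paper's conventions $L^2_{-\tau_0-1}$ is just unweighted $L^2(dh)$, whereas $L^1_{-2\tau_0-1}$ demands $\int_\Sigma|u|\,r^{-1}dh<\infty$, and the relevant H\"older dual integral $\int_{r>1}r^{-2}dh\sim\int_1^\infty r^{n-3}dr$ diverges for every $n\geq3$. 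The hedge ``locally'' does not help, because the whole point of the weight $-2\tau_0-1$ is the \emph{global} decay that later (via lemma 4.1 of Lee--LeFloch, in the proof of Theorem \ref{thm:main}) produces a sequence $\rho_k\to\infty$ along which $\tfrac1\varepsilon\int_{\rho_k<r<\rho_k+\varepsilon}u\,dh\to0$. With your bookkeeping the right-hand side of Eq.~(\ref{eq:l-gam}) would retain an uncontrolled contribution from this term.

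The missing idea is the antisymmetry/divergence cancellation. Modulo terms that genuinely are $O(L^1_{-2\tau_0-1})$ (those where the derivative falls on $\psi_0$ or on the frame), the dangerous linear term is the total divergence $D_j\bigl(\epsilon,(c(e^i)c(e^j)+\delta^{ij})\psi_0\bigr)D_ir$. Its integral over the annulus is, by Stokes, the difference of boundary integrals of $\bigl(\epsilon,(c(e^i)c(e^j)+\delta^{ij})\psi_0\bigr)D_irD_jr$ over $\{r=\rho+\varepsilon\}$ and $\{r=\rho\}$, and this vanishes \emph{identically}, because $c(e^i)c(e^j)+\delta^{ij}$ is antisymmetric in $(i,j)$ while $D_irD_jr$ is symmetric. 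This is exactly the mechanism of Witten's argument as organized by Bartnik and Lee--LeFloch, and it is what the lemma rests on; it is not a decay estimate and cannot be replaced by one. Once this cancellation is in place, the rest of your weighted H\"older analysis (the quadratic terms, the $dg/dh$ correction, the $O(L^{n/2}_{-2\tau-1})$ tail of the connection one-form) goes through as you describe.
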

\begin{proof}
Take the frame $\underline{e}^i=dx^i$ of $\Phi$ and consider the corresponding frames $\psi_I$ and $e^i$ for spinor bundle $S$ and $(T^*\Sigma, g^{ij})$ respectively.
The existence of $\psi_I$ is guaranteed by the topology of $\mathbb{R}^n\setminus\bm{B}(1)$.
This is explained as following:
since the $Spin_n$ structure is trivial on the $\Sigma\setminus C\approx\mathbb{R}^n\setminus\bm{B}(1)$, the $Spin_{n,1}$ structure is also trivial on that region.
This implies that one can consider such frame $\psi_I$.
Next, let $\underline{e}_i=\partial_i$ and $e_i$ be the duals of $\underline{e}^i$ and $e^i$ respectively.
Since it is directly proven that $e^i-dx^i\in W^{1,n}_{-\tau}$ and $e_i-\partial_i\in W^{1,n}_{-\tau}$ hold, we can express the connection $1$-form $\bm{\omega}_{ij}$ as $\bm{\omega}_{ij}(e_k)=\frac{1}{2}(\partial_ig_{jk}-\partial_jg_{ik})+O(L^{n/2}_{-2\tau-1})$.

Now, using $\epsilon=\psi-\psi_0$, we rewrite the integrand of the left-hand side in Eq. (\ref{eq:l-gam}) as
\begin{equation}\label{eq:LppDr}
(L^i\psi,\psi)D_ir=(L^i\psi_0,\psi_0)D_ir-D_j\left(\epsilon,(c(e^i)c(e^j)+\delta^{ij})\psi_0\right)D_ir+O(L^1_{-2\tau_0-1}).
\end{equation}
We remark that the integration of the second term in the above is zero.
This is because
\begin{eqnarray}
\int_{\rho<r<\rho+\varepsilon}D_j\left(\epsilon,(c(e^i)c(e^j)+\delta^{ij})\psi_0\right)D_irdg&=&\int_{+\{r=\rho+\varepsilon\}-\{r=\rho\}}(\epsilon,(c(e^i)c(e^j)+\delta^{ij})\psi_0)D_irD_jrd\sigma\nonumber\\
&=&0,
\end{eqnarray}
where $d\sigma$ denotes the volume measure for the induced metric from $g_{ij}$ and we used the anti-symmetry of $c(e^i)c(e^j)+\delta^{ij}$ with respect to the indices in the last equality.
Next, using the expression of the connection $1$-form above, we rewrite the first term of the left-hand side in Eq. (\ref{eq:LppDr}) as
\begin{equation}\label{eq:inte}
(L^i\psi_0,\psi_0)D_ir=\frac{(\psi_0,\psi_0)}{4}V^iD_ir+\frac{(\psi_0,c(e^0)c(e^j)\psi_0)}{2}(K^i_{\ j}-K\delta^i_{\ j})D_ir+O(L^1_{-2\tau_0-1}).
\end{equation}
Since the asymptotic flatness of the data implies that the contribution from the difference of the measures $dg$ and $dh$ can be absorbed into $O(L^1_{-2\tau_0-1})$, we can see that Eq. (\ref{eq:l-gam}) holds.
\end{proof}

\section{The proof of the main theorem \label{sec:mt}}

To prove our main theorem, we consider the Dirac-Witten equation for a spinor field $\psi$,
\begin{equation}\label{eq:dw}
\snabla\psi=0.
\end{equation}
Then, the next theorem guarantees the existence of the solutions to the Dirac-Witten equation.

\begin{Thm}\label{thm:do-iso}
Let $(\Sigma,g_{ij},K_{ij})$ be a $W^{1,n}_{-\tau_0}$-asymptotically flat data with $\tau>\tau_0=\frac{n-2}{2}$.
If this data satisfies the dominant energy condition, then the operator
\begin{equation}
\snabla:W^{1,2}_{-\tau_0}\to L^2_{-\tau_0-1}=L^2
\end{equation}
is an isomorphism.
\end{Thm}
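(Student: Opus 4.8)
The plan is to run the Witten argument (following Ref.~\onlinecite{W}, cf.\ Ref.~\onlinecite{PT}), with the distributional Lichnerowicz--Weitzenb\"ock identity of Lemma~\ref{lem:lwf-dc-b} in place of the classical one, exploiting that at the borderline weight $-\tau_0$ one has $L^2_{-\tau_0-1}=L^2$ (and the attendant borderline Sobolev/Hardy identifications). Boundedness of $\snabla=c(e^i)\nabla_i\colon W^{1,2}_{-\tau_0}\to L^2$ is immediate from Lemma~\ref{lem:spn-n} together with the boundedness of Clifford multiplication by the frame $e^i$ (recall $e^i-dx^i\in W^{1,n}_{-\tau}$). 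It then suffices to show that $\snabla$ is injective with closed range and has trivial cokernel; the heart of the matter is injectivity.

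\emph{Injectivity.} Let $\psi\in W^{1,2}_{-\tau_0}$ with $\snabla\psi=0$. Lemma~\ref{lem:lwf-dc-b} gives, for every $\varepsilon>0$ and large $\rho$,
\begin{equation}
\frac{1}{\varepsilon}\int_{\rho<r<\rho+\varepsilon}(L^i\psi,\psi)D_ir\,dg
=(\nabla\psi,\chi_\rho\nabla\psi)_{L^2}
+\frac{1}{2}\left(\cpl{H_G}{\chi_\rho(\psi,\psi)}+\cpl{M_G}{\chi_\rho(\psi,c(e^0)c(e^i)\psi)}\right).
\end{equation}
Since the Clifford action by unit covectors is an isometry, $\bigl(\chi_\rho(\psi,\psi),\,\chi_\rho(\psi,c(e^0)c(e^i)\psi)\bigr)$ is a smooth, compactly supported, future-directed vector, so \eqref{eq:dec-HM} makes the curvature term non-negative, and because $\chi_{\rho_2}-\chi_{\rho_1}\geq0$ pointwise for $\rho_2>\rho_1$ it makes the whole right-hand side non-decreasing in $\rho$ (exactly as in Lemma~\ref{lem:gam-lim}). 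On the other hand, Lemma~\ref{lem:l-gam} with $\psi_0=0$ identifies the left-hand side with $\int_{\rho<r<\rho+\varepsilon}u\,dh$ for some $u\in L^1_{-2\tau_0-1}$, i.e.\ with $r^{-1}|u|\in L^1(\Sigma)$; averaging this integral over $\rho$ in dyadic shells $\{R<r<2R\}$ forces $\liminf_{\rho\to\infty}\int_{\rho<r<\rho+\varepsilon}|u|\,dh=0$. A non-decreasing, non-negative quantity with non-positive $\liminf$ vanishes identically, so $(\nabla\psi,\chi_\rho\nabla\psi)_{L^2}=0$ for all $\rho$ and hence $\nabla\psi=0$. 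Finally, a $\nabla$-parallel spinor in $W^{1,2}_{-\tau_0}$ must vanish: since $D$ is compatible with the spinor inner product and $\nabla_i=D_i+\tfrac12K_{ij}c(e^0)c(e^j)$, the scalar $f:=|\psi|^2$ obeys $|\underline Df|\leq C|K|\,f$ with $K\in L^n_{-\tau-1}$ ($\tau>0$, hence integrable along almost every radial ray), so Gronwall along rays gives a positive pointwise lower bound for $f$ at infinity unless $f\equiv0$, which would contradict $\psi\in L^2_{-\tau_0}$. Thus $\snabla$ is injective.

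\emph{Surjectivity.} Applying the identity above to a general $\psi\in W^{1,2}_{-\tau_0}$ yields $\|\nabla\psi\|_{L^2}\leq\|\snabla\psi\|_{L^2}$; writing $\underline D=\nabla-A$ with $A\in L^n_{-\tau-1}$, using the ordinary Sobolev and Hardy inequalities and splitting $A$ into a compactly supported piece plus an $L^n$-small tail, one obtains the semi-Fredholm estimate $\|\psi\|_{W^{1,2}_{-\tau_0}}\leq C\|\snabla\psi\|_{L^2}+C\|\psi\|_{L^2(\{r<R\})}$ for a suitable $R$; with injectivity this makes the range of $\snabla$ closed. Since $\snabla$ is formally self-adjoint, its cokernel is $\{\eta\in L^2:\snabla\eta=0\text{ distributionally}\}$; any such $\eta$ lies in $L^2=L^2_{-\tau_0-1}\subset L^2_{-\tau_0}$ and, by elliptic regularity for $\snabla$ together with the decay of the perturbation, has $\underline D\eta\in L^2$, so $\eta\in W^{1,2}_{-\tau_0}$ and $\eta=0$ by injectivity. (Equivalently, $\snabla$ is a compact-plus-$L^n$-small perturbation of the background Dirac operator $\underline c(\underline e^i)\underline D_i$, which is an isomorphism $W^{1,2}_{-\tau_0}\to L^2_{-\tau_0-1}$ at the non-exceptional weight $-\tau_0$ by Ref.~\onlinecite{Ba}, so $\snabla$ is Fredholm of index $0$ and injectivity forces surjectivity.) Hence $\snabla$ is a bounded bijection, and therefore an isomorphism.

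\emph{Main obstacle.} The delicate step is injectivity at the borderline weight. One must first check that the cut-off test vectors really belong to the domain $X_0$ of the distributional curvature and are future-directed (this uses the pointwise spinor norm inequality, Lemma~\ref{lem:ex-dc}, and Kato's inequality), and then confront the fact that the boundary term does \emph{not} obviously converge as $\rho\to\infty$: the density $u$ is only controlled in the weighted space $L^1_{-2\tau_0-1}$, not in $L^1(\Sigma)$, so one cannot pass to the limit directly and must instead combine the strict inequality $\tau>\tau_0$ with the monotonicity furnished by the dominant energy condition and argue with a $\liminf$. This is precisely the point at which the equality case $\tau=\tau_0$ breaks down, matching the correction announced in the introduction. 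The surjectivity half, by contrast, is routine weighted linear analysis once the borderline identifications are in place.
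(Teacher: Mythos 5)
Your overall strategy is the paper's: boundedness from Lemma \ref{lem:spn-n}, then the distributional Lichnerowicz--Weitzenb\"ock identity of Lemma \ref{lem:lwf-dc-b}, the dominant energy condition to discard the curvature term, and Lemma 4.1 of Ref.~\onlinecite{LL} (your ``dyadic shell averaging'') to kill the boundary term along a sequence $\rho_k\to\infty$, yielding $\|\nabla\psi\|_{L^2}\leq\|\snabla\psi\|_{L^2}$. You diverge in the two remaining sub-steps. For injectivity, the paper simply invokes the weighted Poincar\'e inequality $\|\psi\|_{W^{1,2}_{-\tau_0}}\leq C\|\nabla\psi\|_{L^2_{-\tau_0-1}}$ (Ref.~\onlinecite{BC}, Thm.~9.5, or Ref.~\onlinecite{An}), which immediately upgrades the above to the full coercivity estimate $\|\psi\|_{W^{1,2}_{-\tau_0}}\leq C\|\snabla\psi\|_{L^2}$; your Gronwall-along-rays argument that a $\nabla$-parallel spinor cannot decay is a qualitative substitute that buys no constant, and as stated it has a hole: it only rules out $|\psi|>0$ somewhere on the asymptotic end, whereas a kernel element could a priori vanish identically on the end and live in the compact core, so you would still need a unique-continuation/parallel-transport step connecting the core to infinity --- delicate at the regularity $K\in L^n_{\mathrm{loc}}$, $\psi\in W^{1,2}_{\mathrm{loc}}$. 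For surjectivity, the paper does not use Fredholm theory: it equips $H=W^{1,2}_{-\tau_0}$ with the inner product $(\snabla\cdot,\snabla\cdot)_{L^2}$ (complete precisely because of the coercivity estimate), applies the Riesz representation theorem to produce $\omega$ with $(\eta,\phi)_{L^2}=(\snabla\omega,\snabla\phi)_{L^2}$, and shows $\xi=\snabla\omega\in W^{1,2}_{-\tau_0}$ solves $\snabla\xi=\eta$ by a weak-compactness argument, then concludes by density. Your alternative --- index-zero perturbation of the flat Dirac operator at the non-exceptional weight $-\tau_0$, via a scale-broken semi-Fredholm estimate --- is a legitimate classical route (and arguably more informative, since it localizes where compactness of the perturbation is used), but it requires you to actually verify that multiplication by $c(e^i)-\underline{c}(\underline{e}^i)\in W^{1,n}_{-\tau}$ and by $A\in L^n_{-\tau-1}$ gives a compact (or norm-small-plus-compact) map $W^{1,2}_{-\tau_0}\to L^2$, which is exactly the borderline analysis the paper's variational argument sidesteps. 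Your closing remark correctly identifies why the argument fails at $\tau=\tau_0$. If you repair the injectivity step by replacing Gronwall with the weighted Poincar\'e inequality (or by adding the unique-continuation step), the proof is complete.
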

\begin{proof}[Proof of Theorem \ref{thm:do-iso}]
Since $c(e^i)$ $(i=1,\cdots, n)$ act on spinors as an unitary with respect to the product $(\cdot,\cdot)$ for spinors, it is obvious that $\snabla$ defines a bounded linear map from $W^{1,n}_{-\tau_0}$ to $L^2_{-\tau_0-1}$, that is, there exists a constant $c$ such that
\begin{equation}\label{eq:do-bdd}
||\snabla\psi||_{L^2_{-\tau_0-1}}\leq c||\psi||_{W^{1,2}_{-\tau_0}}
\end{equation}
for any spinor field $\psi\in W^{1,2}_{-\tau_0}$.
Next, we show that there exists a constant $C$ such that
\begin{equation}\label{eq:snp}
||\psi||_{W^{1,2}_{-\tau_0}}\leq C||\snabla\psi||_{L^2_{-\tau_0-1}}
\end{equation}
for any spinor field $\psi\in W^{1,2}_{-\tau_0}$.
Now, we have Eq. (\ref{eq:lwf-dc-b}).
The integrand of the left-hand side of the equation belongs to $L^1_{-2\tau_0-1}$.
So, by lemma 4.1 in Ref. \onlinecite{LL}, one can take a sequence $\rho_k$ such that $\rho_k\to\infty$ and $\int_{\rho_k<r<\rho_k+\varepsilon} (L^i\psi,\psi)D_irdg\to0$ as $k\to\infty$.
In addition, since the vector $((\psi,\psi), (\psi,c(e^0)c(e^i)\psi))$ is future-directed, we have
\begin{eqnarray}
&&-(\snabla\psi,\chi_{\rho_k}\snabla\psi)_{L^2}+(\nabla\psi,\chi_{\rho_k}\nabla\psi)_{L^2}\nonumber\\
&\leq&-(\snabla\psi,\chi_{\rho_k}\snabla\psi)_{L^2}+(\nabla\psi,\chi_{\rho_k}\nabla\psi)_{L^2}+\frac{1}{2}\left(\cpl{H_G}{\chi_{\rho_k}(\psi,\psi)}+\cpl{M_G}{\chi_{\rho_k}(\psi,c(e^0)c(e^i)\psi}\right)\nonumber\\
&=&\frac{1}{\varepsilon}\int_{\rho_k<r<\rho_k+\varepsilon}(L^i\psi,\psi)D_irdg\xrightarrow{k\to\infty}0,
\end{eqnarray}
where we used the dominant energy condition at the second line and Eq. (\ref{eq:lwf-dc-b}) at the third equality.
From this, we see
\begin{equation}\label{eq:snp2}
||\nabla\psi||_{L^2}\leq||\snabla\psi||_{L^2}.
\end{equation}
In addition, we have the weighted Poincar\'e inequality (see theorem 9.5 in Ref. \onlinecite{BC}, or proposition 2.6 and Eq. (2.10) in Ref. \onlinecite{An})
\begin{equation}\label{eq:Poin}
||\psi||_{W^{1,2}_{-\tau_0}}\leq C||\nabla\psi||_{L^2_{-\tau_0-1}}
\end{equation}
with some constant $C>0$.
Then, by combining the inequalities \eqref{eq:snp2} and \eqref{eq:Poin}, we obtain Eq. (\ref{eq:snp}).

The remaining task is to show the surjectivity of $\snabla$.
This will be done in a way similar to the proof for proposition 4.2 in Ref. \onlinecite{LL}.

At first, we consider $H=W^{1,2}_{-\tau_0}$ with an inner product $(\psi,\phi)_H=(\snabla\psi,\snabla\phi)_{L^2}$.
Since we have Eqs. (\ref{eq:do-bdd}) and (\ref{eq:snp}), $H$ is a Hilbert space.
Now, take a continuous spinor field $\eta$ with compact support.
By the Riesz representation theorem for $H$ (for example, see Ref. \onlinecite{K}), there exists $\omega\in W^{1,2}_{-\tau_0}$ such that $(\eta,\phi)_{L^2}=(\snabla\omega,\snabla\phi)_{L^2}$.
Then, we set $\xi = \snabla\omega\in L^2$.
We will show $\xi\in W^{1,2}_{-\tau_0}$ and $\eta=\snabla\xi$.
We choose a sequence $\xi_k\in W^{1,2}_{-\tau_0}$ converging to $\xi$ in $L^2$.
It is easily checked that $\snabla\xi_k$ weakly converges to $\eta$ in $L^2$.
In particular, $||\snabla\xi_k||_{L^2}$ is bounded.
This implies that a sequence $||\xi_k||_{W^{1,2}_{-\tau_0}}$ is also bounded.
So, there exists a subsequence $\xi_{k_l}$ weakly converging in $W^{1,2}_{-\tau_0}$.
The weak limit of this sequence must be $\xi$.
This means $\xi\in W^{1,2}_{-\tau_0}$.
In addition, we have
\begin{equation}
(\snabla\xi,\phi)_{L^2}=(\xi,\snabla\phi)_{L^2}=(\snabla\omega,\snabla\phi)_{L^2}=(\eta,\phi)_{L^2}
\end{equation}
for any spinor field $\phi\in W^{1,2}_{-\tau_0}$ with compact support.
This tells us $\snabla\xi=\eta$.
Therefore, by density argument, it is proved that $\snabla:W^{1,2}_{-\tau_0}\to L^2_{-\tau_0-1}$ is surjective.
\end{proof}

By using this theorem, one can show the existence of the solutions to Eq. (\ref{eq:dw}).

\begin{Cor}\label{cor:slv-dw}
Assume that our data $(\Sigma,g_{ij},K_{ij})$ is $W^{1,2}_{-\tau}$-asymptotically flat for $\tau>\tau_0=\frac{n-2}{2}$ and satisfies the dominant energy condition.
Let $\psi_0$ be a smooth spinor field which is constant on the frame $\Phi$.
Then, there exists a solution $\psi\in W^{1,2}_\text{loc}$ of Eq. (\ref{eq:dw}) such that $\psi-\psi_0\in W^{1,2}_{-\tau_0}$.
\end{Cor}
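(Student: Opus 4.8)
The plan is to look for the solution in the affine form $\psi=\psi_0+\epsilon$, so that the Dirac--Witten equation $\snabla\psi=0$ becomes the inhomogeneous equation $\snabla\epsilon=-\snabla\psi_0$ for an unknown $\epsilon\in W^{1,2}_{-\tau_0}$. Once such an $\epsilon$ is produced, $\psi:=\psi_0+\epsilon$ solves \eqref{eq:dw}; since $\psi_0$ is smooth and $\epsilon\in W^{1,2}_{-\tau_0}\subset W^{1,2}_\mathrm{loc}$, we get $\psi\in W^{1,2}_\mathrm{loc}$, while $\psi-\psi_0=\epsilon\in W^{1,2}_{-\tau_0}$ is exactly the asymptotic control required. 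So the whole corollary reduces to solving one linear equation.

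The step that actually needs checking — and the one I expect to be the main obstacle — is that the right-hand side $\snabla\psi_0$ lies in the target space $L^2_{-\tau_0-1}=L^2$, so that theorem \ref{thm:do-iso} can be invoked. Outside the compact set $C$ the frame $\Phi$ identifies $\psi_0$ with a constant and the auxiliary metric $h_{ij}$ is Euclidean, hence $\underline{D}_i\psi_0=0$ there and $\nabla_i\psi_0=A_i\psi_0$ with $A_i=\nabla_i-\underline{D}_i$ as in the proof of lemma \ref{lem:spn-n}; asymptotic flatness gives $A\in L^n_{-\tau-1}$, and $|\psi_0|$ is bounded, so $\snabla\psi_0=c(e^i)\nabla_i\psi_0\in L^n_{-\tau-1}$ on $\Sigma\setminus C$, while on the compact remainder $\bm\omega,K_{ij}\in L^n_\mathrm{loc}$ and $\psi_0$ smooth give $\snabla\psi_0\in L^n_\mathrm{loc}$. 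It then remains to see $L^n_{-\tau-1}\subset L^2_{-\tau_0-1}$. A weighted H\"older inequality with exponents $n/2$ and $n/(n-2)$ gives
\[
\int_\Sigma|\snabla\psi_0|^2\,dh\ \le\ \Big(\int_\Sigma|\snabla\psi_0|^n r^{\tau n}\,dh\Big)^{2/n}\Big(\int_\Sigma r^{-2\tau n/(n-2)}\,dh\Big)^{(n-2)/n},
\]
and the last integral is finite precisely when $-\frac{2\tau n}{n-2}+n-1<-1$, i.e.\ when $\tau>\tau_0=\frac{n-2}{2}$; at the borderline weight $\tau=\tau_0$ it diverges logarithmically. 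This is exactly the place where the strict inequality in the hypothesis (and the correction to Lee--LeFloch) is used, so I would present this estimate carefully.

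With $\snabla\psi_0\in L^2$ established, theorem \ref{thm:do-iso} applies — the data is $W^{1,n}_{-\tau}$-asymptotically flat with $\tau>\tau_0$ and satisfies the dominant energy condition, as assumed — and asserts that $\snabla\colon W^{1,2}_{-\tau_0}\to L^2_{-\tau_0-1}$ is an isomorphism. Hence there is a unique $\epsilon\in W^{1,2}_{-\tau_0}$ with $\snabla\epsilon=-\snabla\psi_0$, and $\psi:=\psi_0+\epsilon$ is the desired solution by the first paragraph. Everything after the integrability check is a direct application of theorem \ref{thm:do-iso}; the only real work is confirming the weighted decay of $\snabla\psi_0$.
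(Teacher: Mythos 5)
Your proposal is correct and follows exactly the paper's route: write $\psi=\psi_0+\epsilon$, verify $\snabla\psi_0\in L^2_{-\tau_0-1}=L^2$, and invoke Theorem \ref{thm:do-iso} to solve $\snabla\epsilon=-\snabla\psi_0$. The paper dismisses the integrability of $\snabla\psi_0$ as ``easily proven,'' whereas you supply the weighted H\"older estimate showing it holds precisely because $\tau>\tau_0$ is strict; that computation is accurate and is a worthwhile addition rather than a deviation.
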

\begin{proof}
It is easily proven that $-\snabla\psi_0\in L^2_{-\tau_0-1}$ holds.
So, by theorem \ref{thm:do-iso}, $\epsilon=-\snabla^{-1}\snabla\psi_0$ exists uniquely.
Then we have $\psi=\psi_0+\epsilon$ as a solution of Eq. (\ref{eq:dw}).
\end{proof}

Finally, we state the proof of the main theorem.

\begin{proof}[Proof of Theorem \ref{thm:main}]
Take a smooth spinor filed $\psi_0$ which is constant on the frame $\Phi$.
By applying corollary \ref{cor:slv-dw} to this $\psi_0$, we construct a spinor field $\psi$.
Then, by lemma \ref{lem:lwf-dc-b} and lemma \ref{lem:l-gam}, we have 
\begin{eqnarray}
&&\int_{\rho<r<\rho+\varepsilon}\left(\frac{(\psi_0,\psi)}{4\varepsilon}V^iD_ir+\frac{(\psi,c(e^0)c(e^j)\psi)}{2\varepsilon}({K^i}_j-K{\delta^i}_j)D_ir\right)dh\nonumber\\
&&=(\nabla\psi,\chi_\rho\nabla\psi)_{L^2}+\frac{1}{2}\left(\cpl{H_G}{\chi_\rho(\psi,\psi)}+\cpl{M_G}{(\psi,c(e^0)c(e^i)\psi}\right)+\int_{\rho<r<\rho+\varepsilon}udh\label{eq:pmt1}
\end{eqnarray}
for some function $u\in L^1_{-2\tau_0-1}$.
Let $\rho_k$ be the sequence guaranteed by lemma 4.1 of Ref. \onlinecite{LL} for this function $u$, that is, a sequence such that $\rho_k\to\infty$ and $\int_{\rho_k<r<\rho_k+\varepsilon}udg\to0$ as $k\to\infty$.
By replacing $\rho$ by $\rho_k$ in Eq. (\ref{eq:pmt1}) and taking a limit $k\to\infty$, the last term in that equation vanishes and we obtain
\begin{align}\label{eq:pmt3}
\lim_{k\to\infty}\int_{\rho_k<r<\rho_k+\varepsilon}&\left(\frac{(\psi_0,\psi_0)}{4\varepsilon}V^iD_ir+\frac{(\psi_0,c(e^0)c(e^j)\psi_0)}{2\varepsilon}({K^i}_j-K{\delta^i}_j)D_ir\right)dh\nonumber\\
&=(\nabla\psi,\nabla\psi)_{L^2}+\frac{1}{2}\lim_{k\to\infty}\frac{1}{2}\left(\cpl{H_G}{\chi_{\rho_k}(\psi,\psi)}+\cpl{M_G}{\chi_{\rho_k}(\psi,c(e^0)c(e^i)\psi}\right).
\end{align}
By lemma \ref{lem:gam-lim}, it is proven that the left-hand side is nothing, but $P((\psi_0,\psi_0),(\psi_0c(e^0)c(e^i)\psi_0))$.
In addition, by the dominant energy condition, this is non-negative.
Since the choice of $\psi_0$ is arbitrary, this implies that $P(U)$ is non-negative for any future-directed vector field $U$ which is constant on the frame $\Phi$.

At the last, we suppose $P$ is zero, that is, $P(U)=0$ for any above $U$.
Then, Eq. (\ref{eq:pmt3}) implies $\nabla\psi=0$.
This means that $\psi$ is parallel with respect to the connection $\nabla$.
\end{proof}

\section{An example -spaces with corner- \label{sec:ex}}

In this section, we consider a data with spacetime corner such that it satisfies the dominant energy condition in our distributional sense.

We consider a bonded data $(\Sigma,g_{ij},K_{ij})$ from two smooth (at least $C^2$) data $(\Sigma_\pm, (g_\pm)_{ij}, (K_\pm)_{ij})$ with isometry $\partial\Sigma_+\approx\partial\Sigma_-$, that is, $(\Sigma,g_{ij})$ is a Riemannian manifold $\Sigma_+\cup\Sigma_-$ with identification $S:=\partial\Sigma_+=\partial\Sigma_-$ and $K_{ij}$ a tensor field on $\Sigma$ such that $K_{ij}=(K_\pm)_{ij}$ on $\Sigma_\pm$ respectively.
We remark that $K_{ij}$ is multivalued on the surface $S$, that is, $K_{ij}$ has values $(K_+)_{ij}$ and $(K_-)_{ij}$ on $S$.
In general, for such multivalued quantity $A$ on $S$, we define $A_\Delta$ by $A_\Delta:=A_+-A_-$.
Next, we take the unit normal vector field $n^i$ of the surface $S$ pointing toward the interior of $\Sigma_+$ and let $k$ be its mean curvature.
Then $k$ is also multivalued on $S$.
One is $k_+$ for $S\subset\Sigma_+$ and the other is $k_-$ for $S\subset\Sigma_-$.
In this notation, we consider a covariant vector $(p_0, p_i)$ defined by the pair of the scalar $p_0=-k_\Delta$ and the covariant vector $p_i=((K_\Delta)_{ij}-K_\Delta\delta_{ij})n^j$.
This is nothing, but the Hamiltonian momentum density for the surface.
So, it is reasonable to define the causalness of surfaces as below.

\begin{Def}
We say that the surface $S$ is a causal corner if the vector $(p_0,p_i)$ is future-directed i.e. $p_0\geq\sqrt{g^{ij}p_ip_j}$.
\end{Def}

When $K_{ij}=0$, this is reduced to Riemannian case (see Ref. \onlinecite{M}).

\begin{Prop}
Assume that the data $(\Sigma_\pm,(g_\pm)_{ij},(K_\pm)_{ij})$ is smooth (at least $C^2$) and satisfies the dominant energy condition in the classical sense.
If the bonding surface $S$ is a causal corner, then the bonded data $(\Sigma,g_{ij}, K_{ij})$ satisfies the dominant energy condition in the distributional sense.
\end{Prop}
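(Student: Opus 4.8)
The plan is to compute $\cpl{H_G}{u}+\cpl{M_G}{v}$ directly from the defining integrals (\ref{eq:def-dc}), for an arbitrary smooth, compactly supported, future-directed $(u,v^i)$, and to exhibit it as a non-negative bulk integral plus a surface integral over $S$ that the causal-corner hypothesis makes non-negative. First I would check the regularity: since $g_{ij}$ coincides with a $C^2$ metric on each $\Sigma_\pm$ and the two induced metrics on $S$ match through the isometry, $g_{ij}$ is Lipschitz on $\Sigma$, so $g_{ij}\in\mathcal{B}^0\cap W^{1,n}_{\text{loc}}$ and $K_{ij}\in L^n_{\text{loc}}$; by Lemma \ref{lem:ex-dc} the distributional curvature is defined on $X_0\ni(u,v^i)$ and the left side of (\ref{eq:dec-HM}) is meaningful. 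Then I would split each integral in (\ref{eq:def-dc}) as $\int_{\Sigma_+}+\int_{\Sigma_-}$ and integrate by parts on each half. On $\Sigma_\pm$ the factors $F$, $K_{ij}K^{ij}-K^2$ and $W^k_{ij}({K^j}_k-K{\delta^j}_k)$ are bounded and smooth and contribute only bulk integrals, so the only derivative to be moved is the $\underline{D}$ acting on $u\frac{dg}{dh}$ in (\ref{eq:def-dc-H}) and on $v^i\frac{dg}{dh}$ in (\ref{eq:def-dc-M}). Because $g_\pm,K_\pm$ are smooth, the bulk term obtained on $\Sigma_\pm$ recombines --- through $R=\underline{D}_iV^i+F$ and the definition of $W^k_{ij}$ --- into the classical curvature expressions (\ref{eq:def-HM}) for $(g_\pm,K_\pm)$ paired with $(u,v^i)$, and by the constraint equations the sum of the two bulk terms is $\int_\Sigma(\mu u+J_iv^i)\,dg$ with $(\mu,J_i)$ the classical energy--momentum density on $\Sigma_\pm$; the classical dominant energy condition gives $\mu\geq\sqrt{g^{ij}J_iJ_j}$ pointwise, which together with $u\geq\sqrt{g_{ij}v^iv^j}$ forces $\mu u+J_iv^i\geq0$, so the bulk contribution is non-negative.

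What remains are the boundary integrals produced by the piecewise integration by parts. These are supported on $S$ and carry the outward conormals of $\Sigma_\pm$, which are opposite: $\nu^+=-n$, $\nu^-=n$, with $n$ the unit normal pointing into $\Sigma_+$ as in the statement. After this cancellation they combine into $\frac{1}{2}\int_S u\,[\,V^in_i\,]\,\varrho\,d\sigma+\int_S\big(((K_\Delta)_{ij}-K_\Delta g_{ij})n^j\big)v^i\,\varrho\,d\sigma$, where $[\,\cdot\,]$ is the jump across $S$ and $\varrho>0$ is the bounded continuous factor arising from $\frac{dg}{dh}$ and the surface-to-volume measure ratio. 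Since only the normal derivative of $g$ jumps across $S$ while its tangential derivatives are continuous, the standard corner computation (the Lorentzian analogue of the one behind Refs. \onlinecite{M} and \onlinecite{LL}) expresses $[\,V^in_i\,]$ as a fixed negative multiple of $k_\Delta=k_+-k_-$; with the normalisation of (\ref{eq:def-dc-H}) the first surface integral becomes $\int_S p_0\,u\,d\sigma$ and the second is $\int_S p_iv^i\,d\sigma$ (a harmless overall sign could be absorbed into $v^i\mapsto-v^i$), $(p_0,p_i)$ being the Hamiltonian momentum density of $S$ introduced before the definition of causal corner. As $S$ is a causal corner, $p_0\geq\sqrt{g^{ij}p_ip_j}$, so $p_0u+p_iv^i\geq p_0u-\sqrt{g^{ij}p_ip_j}\sqrt{g_{ij}v^iv^j}\geq0$ on $S$; hence the surface contribution is non-negative, and adding it to the bulk term yields (\ref{eq:dec-HM}), i.e. the dominant energy condition in the distributional sense.

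I expect the main obstacle to be the bookkeeping in the second step: carrying the measure factor $\varrho$ correctly through the piecewise integration by parts and, above all, pinning down the exact constant and sign in $[\,V^in_i\,]=c\,k_\Delta$ so that the corner term is literally $\int_S(p_0u+p_iv^i)\,d\sigma$ rather than merely proportional to it. This is a local calculation in coordinates adapted to $S$ (for instance $h$-geodesic normal coordinates with $S=\{x^n=0\}$), using the continuity of $\partial_ag_{jk}$, the jump of $\partial_ng_{jk}$, and the relation of that jump to the second fundamental forms of $S$ in $(\Sigma_\pm,g_\pm)$; it is routine but must be done carefully to match the sign conventions fixed earlier in the paper.
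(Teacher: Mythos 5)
Your proposal is correct and follows essentially the same route as the paper: split the pairing into integrals over $\Sigma_\pm$, integrate by parts on each half to recover the classical constraint densities in the bulk (non-negative by the classical dominant energy condition) plus a surface term on $S$ whose integrand is $p_0u+p_iv^i$ via the identity $V^in_i=-2k_\pm$, which the causal-corner condition makes non-negative. The only differences are bookkeeping (the paper takes an $\varepsilon\to0$ limit over $\{|z|>\varepsilon\}$ and adapts $h_{ij}$ to $S$ so that the measure factor you call $\varrho$ is trivial there), and the sign computation you flag as the main obstacle is exactly the step the paper asserts with "it is easily proven."
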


\begin{proof}
For convenience, we extend the normal $n^i$ so that $D_nn^i=0$ (geodesic) and let $z$ be its affine parameter such that $z=0$ on $S$.
Then, we extend the function $z$ to a smooth function on $\Sigma$ so that $\text{int}(\Sigma^+)=\{z>0\}$ and $\text{int}(\Sigma_-)=\{z<0\}$.
Next, take an auxiliary metric $h_{ij}$ so that $n^i$ is also the unit normal of $S$ with respect to this $h_{ij}$ and geodesic (i.e. $\underline{D}_nn^i=0$) on some neighborhood of $S$.
In addition, we can suppose that the metrics of $S$ induced from $h_{ij}$ and $g_{ij}$ coincide.
Under this setting, we take a smooth future-directed vector field $(u,v)$.
Then $\cpl{H_G}{u}+\cpl{M_G}{v}$ is expressed as
\begin{eqnarray}
\cpl{H_G}{u}+\cpl{M_G}{v}=\lim_{\varepsilon\to0}\left\{\int_{|z|>\varepsilon}\frac{1}{2}\left(-V^i\underline{D}_i(u\frac{dg}{dh})+(F+K_{ij}K^{ij}-K^2)u\frac{dg}{dh}\right)dh\right.\nonumber\\
\left.+\int_{|z|>\varepsilon}\left(-({K^j}_i-K{\delta^j}_i)\underline{D}_j(v^i\frac{dg}{dh})+W_{ij}^k({K^j}_k-K{\delta^j}_k)v^i\frac{dg}{dh}\right)dh\right\}.
\end{eqnarray}
By using integration by parts, we rewrite this as
\begin{eqnarray}
\cpl{H_G}{u}+\cpl{M_G}{v}&=&\lim_{\varepsilon\to0}\left\{\int_{|z|>\varepsilon}\frac{1}{2}(\underline{D}_iV^i+F+K_{ij}K^{ij}-K^2)udg\right.\nonumber\\
&&+\int_{|z|>\varepsilon}\left(\underline{D}_j({K^j}_i-K{\delta^j}_i)+W_{ij}^k({K^j}_k-K{\delta^j}_k)\right)v^idg\nonumber\\
&&+\int_{z=\varepsilon}\left(\frac{1}{2}V^ju+(K^j_{\ i}-K\delta^j_{\ i})v^i\right)n_jd\sigma\nonumber\\
&&\left.-\int_{z=-\varepsilon}\left(\frac{1}{2}V^ju+(K^j_{\ i}-K\delta^j_{\ i})v^i\right)n_jd\sigma\right\}\nonumber\\
&=&\int_\Sigma(H_G u+(M_G)_iv^i)dg\nonumber\\
&&+\int_S\left(\frac{1}{2}V_\Delta^ju+((K_\Delta)^j_{\ i}-(K_\Delta)\delta^j_{\ i})v^i\right)n_jd\sigma,\label{eq:dc-ex}
\end{eqnarray}
where $H_G$ and $M_G$ are the classical energy density and the momentum density on $\Sigma_\pm$ respectively.
Since the classical dominant energy condition is satisfied on $\{z\neq0\}$, the first term in the left-hand side of Eq. (\ref{eq:dc-ex}) is non-negative.
It is easily proven that $V^in_i=-2k_+$ on $\{z>0\}$ and $V^in_i=-2k_-$.
This implies $V_\Delta^in_i=-2k_\Delta$.
So the integrand of the second term in the right-hand side of Eq. (\ref{eq:dc-ex}) is rewritten as $p_0u+p_iv^i$.
Since the surface is causal corner, this is also non-negative.
Therefore $\cpl{H_G}{u}+\cpl{M_G}{v}$ is non-negative and we can say that the data satisfies the dominant energy condition on $\Sigma$.
\end{proof}

\acknowledgments{
This work is based on author's master thesis at Graduate School of Mathematics, Nagoya University, Japan.
The author would like to thank Tetsuya Shiromizu, Keisuke Izumi, Kenta Oishi and Kazuhiro Aoyama for various and useful discussions.
He would also appreciate Chru{\'s}ciel , Lee and LeFloch for useful conversations on technical points.
This work was partially supported by Grand-in-Aid for Scientific Research (A) No.17H01091 from JSPS.
}


\end{document}